\newtheorem{theorem}{Theorem}
\newtheorem{lemma}{Lemma}
\newtheorem{definition}{\textbf{Definition}}
\newcommand{\Expt}{\mbox{${\mathbb E}$} }
\begin{document}

\title{Capacity-Achieving Private Information Retrieval Codes with Optimal Message Size and Upload Cost}

\author{Chao Tian \qquad Hua Sun \qquad Jun Chen}

\maketitle

\begin{abstract}
We propose a new capacity-achieving code for the private information retrieval (PIR) problem, and show that it has the minimum message size (being one less than the number of servers) and the minimum upload cost (being roughly linear in the number of messages) among a general class of capacity-achieving codes, and in particular, among all capacity-achieving linear codes. Different from existing code constructions, the proposed code is asymmetric, and this asymmetry appears to be the key factor leading to the optimal message size and the optimal upload cost. The converse results on the message size and the upload cost are obtained by an analysis of the information theoretic proof of the PIR capacity, from which a set of critical properties of any capacity-achieving code in the code class of interest is extracted. The symmetry structure of the PIR problem is then analyzed, which allows us to construct symmetric codes from asymmetric ones, yielding a meaningful bridge between the proposed code and existing ones in the literature. 
\end{abstract}

\section{Introduction}

The private information retrieval (PIR) problem addresses the following scenario. A total of $K$ messages, each of $L$ bits (or $L$ symbols in some finite alphabet), are replicated at $N$ servers. A user wishes to retrieve one of the messages without revealing the identity of the desired message to any individual server. To retrieve this message, the user generates one query for each server and each server will return an answer to the user, which depends on the stored messages and the received query. To ensure that each server learns nothing about which message is being retrieved in the information theoretic sense, each query must be marginally independent of the desired message index. The PIR problem admits a trivial solution, where the user simply requests all the messages. However, downloading everything obviously incurs too much communication cost, and PIR systems should be designed to communicate as efficiently as possible between the user and the servers. 

In PIR systems, the most important measure of communication efficiency is the retrieval rate, defined as the number of message information bits that can be retrieved per bit of downloaded data from the servers. The maximum value of retrieval rate of a PIR system is referred to as its capacity, and the problem of characterizing the capacity is of fundamental importance in this setting. This problem was recently settled in \cite{sun2017PIRcapacity} where the capacity was found to be 
\begin{align}
C =\left(1+\frac{1}{N}+\frac{1}{N^2}+\ldots+\frac{1}{N^{K-1}}\right)^{-1}. \label{eqn:capacityfomular}
\end{align} 
Other notable efforts and generalizations on the PIR problem in the coding and information theory literature can be found in, {\em e.g., }\cite{banawan2018capacity,Chan_Ho_Yamamoto,Blackburn_Etzion_Paterson,Tajeddine_Rouayheb,FREIJ_HOLLANTI,Zhang_Wang_Wei_Ge,Shah_Rashmi_Kannan,
Fazeli_Vardy_Yaakobi,Sun_Jafar_MPIR,Sun_Jafar_MDSTPIR,Wang_Skoglund,Kumar_Rosnes_Amat,Tandon_CachePIR,Banawan_Ulukus_Byzantine,
Banawan_Ulukus_Multimessage,Banawan_Ulukus_Asymmetric,Wang_Skoglund_SPIREve,Lin_Kumar_Rosnes_Amat,Tajeddine_Gnilke_Karpuk_Etal,sun2017optimal,Tian_Sun_Chen_Storage}. 

In the previous works where the PIR capacity is concerned, such as \cite{sun2017PIRcapacity, banawan2018capacity, Wang_Skoglund, Tandon_CachePIR,Banawan_Ulukus_Multimessage,Banawan_Ulukus_Asymmetric,Wang_Skoglund_SPIREve}, it is usually assumed that the message length $L$ is sufficiently large ($L$ is allowed to go to infinity). As a consequence, the corresponding code constructions in the literature are usually built by recursively layering message symbols and parity symbols using symmetry relations, resulting in codes that can only be applied on very long messages. The number of symbols in each message that a code can be applied on is sometimes referred to as the sub-packetization factor of the code. A smaller message length (sub-packetization factor) means that the code is more versatile, has less constraints, and may lead to more efficient implementation in practice. Another design factor that is of practical importance is the possible number of queries that each server needs to accommodate, {\em i.e., } the cardinalities of the query sets. Small cardinalities of the query sets imply that, firstly, the amount of information that needs to be sent to the servers (often referred to as the upload cost) is small during the query operation, and secondly, the servers only need to compute a small set of functions, both of which lead to simpler and more efficient system implementation. 

In this work, we consider the construction of capacity-achieving PIR codes, and the contribution of this work is three-fold. 
\begin{enumerate}[itemsep=0pt]
\item Firstly, we propose a novel capacity-achieving PIR code construction, which has a small message size of $(N-1)$ bits and a low upload cost of $N(K-1)\log_2N$ bits. The coding alphabet of the proposed code can in fact be chosen to be any finite group or finite field, and particularly, to be the binary field which results in a binary linear code. Different from existing code constructions, the code proposed in this work is asymmetric, and this asymmetry appears to be the key to the significant reductions in the message size and the upload cost compared to other capacity-achieving codes. 
\item Secondly, through a novel and delicate analysis of the converse proof of the PIR capacity, we identify a set of critical properties for a class of capacity-achieving codes on abelian groups, which we refer to as decomposable codes. Based on these properties we further derive novel converses for the message size and the upload cost. These converse bounds match the corresponding values in the proposed code, thus establishing the optimality of the proposed code construction in terms of the message size and the upload cost within the corresponding code classes, in particular, among capacity-achieving linear codes. 
\item Last but not least, the relation between symmetric PIR codes and asymmetric PIR codes is analyzed in details. The symmetry in this problem setting in fact includes three different components, namely symmetry in server indices, symmetry in file indices, and symmetry in the query answers. The analysis reveals certain fundamental structures in the problem setting that were largely overlooked in the existing literature. Using these symmetry relations, we show that the proposed code (in fact any asymmetric code) can be used to build more symmetric PIR codes, which offers a bridge between the proposed code and the existing code constructions. 
\end{enumerate}

It should be noted that efforts on the PIR problem in the theoretical computer science community focus on an alternative formulation where the message length $L$ is assumed to be small and fixed, usually a single bit, but the number of messages and the number of servers are allowed to grow asymptotically \cite{PIRfirstjournal}. In this setting, the overall communication cost can be viewed as consisting of upload cost and download cost, the latter of which is inversely proportional to the retrieval rate, and they can be traded off between each other. In fact, there exists a complex relation among the three quantities of the message size, the upload cost, and the download cost. For example, for the solution of retrieving everything, the upload cost is 0 as nothing needs to be sent to the servers ({\em i.e.}, there is no randomness in the queries) and the download cost is $KL$ as all messages are retrieved, and the message size can be $L=1$ bit each. Characterizing even the sum cost of upload and download for the case $L=1$ in the original theoretical computer science formulation appears to be intractable, and instead order-wise bounds have been investigated; there have been considerable efforts and many significant results after the ground-breaking work of \cite{PIRfirstjournal}; see {\em e.g.,} \cite{Ambainis,Beimel_Ishai_Kushilevitz_Raymond,Efremenko, Yekhanin_LDC, 2PIR,Ishai_Kushilevitz}. Against this general backdrop, our result can be viewed as the first to precisely determine the relation among the message size, the upload cost, and the download cost, for the extreme point when the download cost is minimized.   

The rest of the paper is organized as follows. Section \ref{sec:model} provides the problem definition and the necessary notation. Section \ref{sec:newPIR} gives the proposed PIR code construction. The converse results on the minimum message size and the minimum upload cost are given in Section \ref{sec:optimal}, and the symmetry relations are discussed in Section \ref{sec:symmetry}. Finally, Section \ref{sec:conclusion} concludes the paper.

\section{Model and Preliminaries}
\label{sec:model}

In this section, we provide a formal problem definition, as well as the necessary notation for subsequent discussions. A slightly different indexing method is chosen in this work: instead of the more conventional indexing of starting at $1$, the indexing here starts at $0$. This does not make any essential difference in the problem and the solution, however it will lead to notional simplicity when we present the new code construction. 

\subsection{System Model}
The private information retrieval model can be formally described as follows. There are a total of $N$ servers, each storing a copy of $K$ messages, denoted as $W_0,W_1,\ldots,W_{K-1}$, respectively. A user wishes to retrieve a message $W_k$, $k\in \{0,1,\ldots,K-1\}$, however at the same time wishes to keep the identity of the message being retrieved as a secret to any one of the servers. For this purpose, the user, using a random input  $\mathsf{F}$ as the key, chooses a set of queries, $Q_{0:N-1}=(Q_0,Q_1,\ldots,Q_{N-1})$, one per server, and sends the queries to the servers. Server-$n$ responds with an answer $A_n$, which depends on the messages stored at the server and the received query. Using all the answers $A_{0:N-1}=(A_0,A_1,\ldots,A_{N-1})$ from all the servers, together with the values of $\mathsf{F}$ and $k$, the user then reconstructs $W_k$. The privacy requirement stipulates that at each server, the probability distributions on the allowed queries are identical for all the messages, thus the server cannot learn any information regarding which message is being requested.  

We now give a more mathematically precise description of the problem. Denote the set of possible queries for server-$n$ as $\mathcal{Q}_n$, and denote its cardinality as $|\mathcal{Q}_n|$. The cardinality of a set $\mathcal{A}$ will be similarly denoted as $|\mathcal{A}|$ in the rest of the paper. Assume that the random key $\mathsf{F}$ is uniformly distributed on a certain finite set $\mathcal{F}$. Moreover, a message $W_k$ consists of $L$ symbols, each symbol belonging to a finite alphabet $\mathcal{X}$; in particular, for messages in computer systems, we usually use $\mathcal{X}=\{0,1\}$. The messages are mutually independent, each of which is uniformly distributed on $\mathcal{X}^L$.  We further allow the query answers to be represented as a variable length vector, whose elements are in the finite alphabet $\mathcal{Y}$, though our code construction will eventually only use $\mathcal{Y}=\mathcal{X}$.

\begin{definition}
\label{def:problem}
An $N$-server private information retrieval (PIR) code for $K$ messages, each of $L$-symbols in the alphabet $\mathcal{X}$, consists of
\begin{enumerate}[itemsep=0pt]
\item $N$ query functions: 
\begin{align}
\phi_n: \{0,1,\ldots,K-1\}\times \mathcal{F}\rightarrow \mathcal{Q}_n,\quad n\in \{0,,1,...,N-1\},
\end{align}
i.e., the user chooses the query $Q^{[k]}_n=\phi_n(k,\mathsf{F})$ for server-$n$, using the index of the desired message and the random key $\mathsf{F}$;
\item $N$ answer length functions: 
\begin{align}\ell_n: \mathcal{Q}_n\rightarrow \{0,1,2,\ldots\},\quad n\in \{0,1,...,N-1\}, 
\end{align} 
i.e., the length of the answer at each server, a non-negative integer, is a deterministic function of the query, but not the particular realization of the messages; 
\item $N$ answer functions: 
 \begin{align}
\varphi_{n}: \mathcal{Q}_n\times \mathcal{X}^{KL}\rightarrow \mathcal{Y}^{\ell_n},\quad n\in\{0,1,...,N-1\},\label{eqn:answers}
\end{align}
where $\ell_n=\ell_n(q_n)$ with $q_n\in \mathcal{Q}_n$ being the (random) query for server-$n$, $\mathcal{Y}$ is the coded symbol alphabet, and in the sequel we shall write the query answer as $A^{[k]}_n\triangleq\varphi_{n}(Q^{[k]}_n,W_{0:K-1})$ when the message index $k$ is relevant;
\item A reconstruction function using the answers from the servers together with the desired message index and the random key: 
\begin{align}
\psi: \prod_{n=0}^{N-1} \mathcal{Y}^{\ell_n}\times \{0,1,...,K-1\}\times \mathcal{F}\rightarrow \mathcal{X}^{L},
\end{align} 
i.e., $\hat{W}_k=\psi(A^{[k]}_{0:N-1},k,\mathsf{F})$ is the retrieved message. 
\end{enumerate}
These functions should satisfy the following two requirements:
\begin{enumerate}

\item \textbf{Correctness: } For any $k\in \{0,1,...,K-1\}$, $\hat{W}_k=W_k$. 

\item \textbf{Privacy: } For every $k,k'\in\{0,1,...,K-1\}$, $n\in \{0,1,...,N-1\}$, and $q\in \mathcal{Q}_n$, 
\begin{align}
\mathbf{Pr}(Q^{[k]}_n=q)=\mathbf{Pr}(Q^{[k']}_n=q).
\end{align}
\end{enumerate}
\end{definition}

The correctness condition here requires that the reconstructed message as a random variable is the same as the requested message, and it thus inherently requires that for any realization of $\mathsf{F}$, the equality must hold.  It is in fact without loss of generality to restrict $\mathcal{F}$ and $\mathcal{Q}_n$'s to be certain finite sets of integers, however, we allow them to be more general sets, which will facilitate describing more concisely the proposed PIR code construction. It is also worth noting that the alphabet $\mathcal{Y}$ in the problem definition may be an abstract finite set, with no further structure assigned to it. However, for any such a finite set, we can establish a bijective mapping between $\mathcal{Y}$ and the set $\{0,1,\ldots,|\mathcal{Y}|-1\}$. By further enforcing an operation between any two elements in the latter set (for example, modulo $|\mathcal{Y}|$ addition), the set $\mathcal{Y}$ can also be assigned an operation through homomorphism. In other words, any abstract set $\mathcal{Y}$ can always be viewed as a finite group, however requiring $\mathcal{Y}$ to be a finite group in the problem definition is  unnecessary. 

\subsection{Two General Code Classes}

We next define precisely the code classes in which we establish the optimality of our proposed code construction. These definitions are technical, and the readers may wish to skip them at the initial read and simply consider the more restricted code class of vector linear codes on a finite field, without materially jeopardizing understanding the code construction in Section \ref{sec:newPIR}. These two definitions only become important in Section \ref{sec:optimal} when the optimality of the proposed code construction is established within these more general classes. 

\begin{definition}
\label{def:linearcodes}
A PIR code is called decomposable, if $\mathcal{Y}$ is a finite abelian group, and for each fixed $n\in \{0,1,\ldots,N-1\}$ and $q\in \mathcal{Q}_n$, the answer function $\varphi_{n}(q,W_{0:K-1})$ can be written in the form
\begin{align}
\varphi_{n}(q,W_{0:K-1})=\left(\varphi^{(q)}_{n,0}(W_{0:K-1}),\varphi^{(q)}_{n,1}(W_{0:K-1}),\ldots,\varphi^{(q)}_{n,\ell_n-1}(W_{0:K-1})\right),
\end{align}
where
\begin{align}
\varphi^{(q)}_{n,i}(W_{0:K-1})=\varphi^{^{(q)}}_{n,i,0}(W_0)\oplus\varphi^{(q)}_{n,i,1}(W_1) \oplus \ldots \oplus \varphi^{^{(q)}}_{n,i,K-1}(W_{K-1}),\quad i\in \{0,1,\ldots,\ell_n-1\}\label{eqn:decompose}
\end{align}
where $\oplus$ represents addition in the finite group $\mathcal{Y}$, and each $\varphi^{(q)}_{n,i,k}$ is a mapping $\mathcal{X}^L\rightarrow\mathcal{Y}$. 
\end{definition}

The terminology ``decomposable'' comes from (\ref{eqn:decompose}) which restricts each coded symbol to be a summation (in the abelian group) of the component functions on the individual messages. 
Let us consider an example where the two messages, each of a single symbol, are in certain ring $(\mathcal{Y},\oplus,\otimes)$, and the resulting answer for $n=q=0$ is 
\begin{align*}
\varphi_{0}(0,(W_0,W_1))&=(\varphi^{(0)}_{0,0}(W_0,W_1),\varphi^{(0)}_{0,1}(W_0,W_1),\varphi^{(0)}_{0,2}(W_0,W_1))\\
&=\big{(}(W_0\otimes W_0)\oplus \alpha,(W_0\otimes W_0)\oplus (W_1\otimes W_1)\oplus \alpha, (W_1\otimes W_1)\oplus \alpha\big{)},
\end{align*}
where $\alpha$ is an element of the ring. This code belongs to the code class of decomposable codes, but it is clearly not linear. In the component function $\varphi^{(0)}_{0,1}(W_0,W_1)$, we have
\begin{align*}
\varphi^{(0)}_{0,1}(W_0,W_1)=\varphi^{^{(0)}}_{0,1,0}(W_0)\oplus\varphi^{(0)}_{0,1,1}(W_1)=(W_0\otimes W_0)\oplus((W_1\otimes W_1)\oplus \alpha).
\end{align*}

\begin{definition}
\label{def:uniform}
If a decomposable PIR code has the property that any component function $\varphi^{(q)}_{n,i,k}$ in (\ref{eqn:decompose}) either satisfies the condition
\begin{align}
\left|\left\{w\in \mathcal{X}^L:\varphi^{(q)}_{n,i,k}(w)=g\right\}\right|=\left|\left\{w\in \mathcal{X}^L:\varphi^{(q)}_{n,i,k}(w)=g'\right\}\right|,\quad \forall g,g'\in \mathcal{Y},
\end{align}
or it maps everything to the same value,  i.e., 
\begin{align}
\varphi^{(q)}_{n,i,k}(w)=\varphi^{(q)}_{n,i,k}(w'),\quad \forall w,w'\in\mathcal{X}^L, 
\end{align}
then the PIR code is called uniformly decomposable.
\end{definition}

A uniformly decomposable PIR code has the property that the decomposed message mappings $\varphi^{(q)}_{n,i,k}$ will preserve a uniform probability distribution on the coded symbol alphabet, unless the induced random variable is in fact deterministic. 
The notion of decomposable codes considerably generalizes the notion of linear codes. In particular, linear codes on finite fields are uniformly decomposable, and linear codes defined on modules over a ring \cite{connelly2018linear1,connelly2018linear2} are decomposable (and some are uniformly decomposable); it also naturally includes codes defined on cosets of a binary lattice and some nonlinear codes.  
In Section \ref{sec:optimal}, we establish general outer bounds for decomposable codes, which imply that the proposed code is optimal in the corresponding code classes, and particularly, it is optimal among all linear codes.

Decomposable codes can be simply represented as 
\begin{align}
\varphi_{n}(q,W_{0:K-1})=W_{0:K-1}\cdot G^{(q)}_n, \label{eqn:matrix}
\end{align}
where $W_{0:K-1}$ is viewed as a length-$K$ vector whose components are in the alphabet $\mathcal{X}^L$, and $G^{(q)}_n$ is a matrix of dimension $K\times\ell_n$ whose elements $G^{(q)}_{n,i,k}$ are functions $\mathcal{X}^L\rightarrow\mathcal{Y}$ with the ``$\cdot$'' operation between $W_k$ and the matrix element $G^{(q)}_{n,i,k}$ defined as
\begin{align}
W_{k}\cdot G^{(q)}_{n,i,k}\triangleq\varphi^{(q)}_{n,i,k}(W_k).
\end{align}

Consider another example  (a uniformly decomposable code) where $K=3$, $L=1$, and the answer for query $q=0$ for server-0 is 
\begin{align}
\varphi_0(q=0,W_{[0:2]})&=\varphi^{(0)}_{0,0}(W_0,W_1,W_2)=W_{0}\ominus W_{2}
\end{align}
where $\mathcal{X}=\mathcal{Y}$ is the finite group $\{0,1,2,3\}$ with $\oplus$ and $\ominus$ being the modulo-$4$ addition and subtraction. Then we have 
\begin{align}
\varphi_0(q=0,W_{[0:2]})&=\varphi^{(0)}_{0,0,0}(W_0)\oplus \varphi^{(0)}_{0,0,1}(W_1)\oplus\varphi^{(0)}_{0,0,2}(W_2)=(W_{0})\oplus (0)\oplus(\ominus W_{2})\nonumber\\
&=[W_0,W_1,W_2]\cdot G^{(0)}_0=[W_0,W_1,W_2] \begin{bmatrix}1\\0\\-1\end{bmatrix},
\end{align}
In the matrix representation, $G^{(0)}_0=[1,0,-1]^{t}$, where $1$ stands for the identity function, $0$ for the all zero function, and $-1$ for the negation function.

For linear codes defined on a finite field $\mathcal{X}$, the function $\varphi^{(q)}_{n,i,k}(W_k)$ is the inner product between the length-$L$ message vector $W_k$, and a fixed length-$L$ coding coefficient vector in the same finite field. In this case, $W_{0:K-1}$ can be alternatively written as a length-$KL$ vector in the alphabet $\mathcal{X}$, and the matrix $G^{(q)}_{n}$ can be further expanded as a $KL\times\ell_n$ matrix whose elements are also in $\mathcal{X}$, and the finite field addition and multiplication will be used in the matrix multiplication. Such a matrix $G^{(q)}_{n}$ is in fact simply the familiar generator matrix of (vector) linear codes \cite{LinCostello:book}.

It should be noted that most converse results on linear codes in the literature have been established by deriving relations among the ranks of the coding matrices, whereas our converse proof in Section \ref{sec:optimal} is information-theoretic in nature. The benefit of our approach is that it allows us to derive converse bounds for the general class of codes in a single framework. 

\subsection{Performance Metrics}

The performance of an $N$-server PIR code can be measured using the following three quantities:
\begin{enumerate}[itemsep=0pt]
\item The retrieval rate
\begin{align}
R\triangleq \frac{L\log_2 |\mathcal{X}|}{\log_2|\mathcal{Y}|\sum_{n=0}^{N-1}\Expt(\ell_n)}, \label{rate_def}
\end{align}
which is the number of bits of desired message information that can be privately retrieved per bit of downloaded data. This quantity should be maximized, because higher rate implies fewer number of bits to be downloaded when retrieving a message. It was shown in \cite{sun2017PIRcapacity} that the retrieval rate is upper-bounded by the PIR capacity $C$, {\em i.e., } $R\leq C$, which is a function of $(N,K)$ as given in (\ref{eqn:capacityfomular}). 
\item The message size $L\log_2|\mathcal{X}|$, which is the number of bits to represent each individual message. This quantity should also be minimized, because PIR schemes for a larger message size can be constructed by concatenating multiple schemes for a smaller message size, but not vice versa. Therefore, in practical applications, a smaller message size implies a more versatile code design; similar considerations of reducing the sub-packetization factor also exist for the regenerating code problem, {\em e.g., }\cite{sasidharan2016explicit, YeBarg:17packetization,goparaju2014improved,balaji2017tight, li2018generic}, and the coded caching problem, {\em e.g., } \cite{shanmugam2016finite,yan2017placement,Tang:16}. Note that we refer to the parameter $L$ as the message length, while the definition of the message size also takes into account the alphabet size $|\mathcal{X}|$.
\item The upload cost 
\begin{align}
\sum_{n=0}^{N-1} \log_2  |\mathcal{Q}_n|,
\end{align}
which is the number of bits required to send the queries to the servers.  This quantity should be minimized for an efficient PIR code, since a smaller upload cost implies less user-to-server communication, and simpler server functions as mentioned earlier.
\end{enumerate}

The code construction we shall propose in this work is optimal in the following senses:
\begin{enumerate}[itemsep=0pt]
\item It is capacity-achieving $R=C$, {\em i.e.,} the retrieval rate is optimal;
\item It has the smallest, thus optimal, message size among all capacity-achieving uniformly decomposable codes;
\item It has the smallest, thus optimal, upload cost among all capacity-achieving decomposable codes.
\end{enumerate}

\section{A New Capacity-Achieving PIR Code}	
\label{sec:newPIR}

In this section, we provide the details of the proposed codes. Before presenting the code construction under general parameters, we provide a motivating example for the case of  $(N,K)=(2,2)$.

\subsection{A Motivating Example $(N,K)=(2,2)$}
\label{sec:motivatingexample}

\begin{table}
\centering
\caption{Answers for message $A$ and $B$ for $(N,K)=(2,2)$ in a simple code. \label{table:2_2alter}}
\vspace{0.2cm}
\begin{tabular}{|c||c|c||c|c|}\hline
\multirow{2}{*}{}&\multicolumn{2}{c||}{Requesting $A$}&\multicolumn{2}{c|}{Requesting $B$}\\\cline{2-5}
                            &Server-1&Server-2&Server-1&Server-2\\\hline\hline
$\mathsf{F}=0$&$0$  &$a$&$0$    &$b$\\\hline
$\mathsf{F}=1$&$a+b$&$b$&$a+b$&$a$\\\hline
\end{tabular}
\end{table}

Let us consider two different codes for the $(N,K)=(2,2)$ case.

\begin{itemize}
\item A very simple but new capacity-achieving code is as given in Table \ref{table:2_2alter}, where $0$ in the transmission means no symbol is transmitted. Here the two messages $A=(a)$ and $B=(b)$, each of which has only $1$ symbol. The random key is binary, uniformly distributed in the key set $\mathcal{F}=\{0,1\}$. It can be seen that the expected download cost is $0.5+1=1.5$, and thus the rate is $2/3$, which achieves the capacity. 
\item In comparison, in the code constructed in \cite{sun2017PIRcapacity}, the two messages $A=(a_1,a_2,a_3,a_4)$ and $B=(b_1,b_2,b_3,b_4)$ each have $4$ symbols. The random key set $\mathcal{F}$ is the collection of permutation $\pi(\cdot)$, which is used to select the one-to-one correspondence between $\{1,2,3,4\}$ and $\{\Box,\Diamond,\clubsuit,\heartsuit\}$. With this correspondence determined, the code is as given in Table \ref{table:2_2}. The download cost is $6$ symbols, and the rate is thus $2/3$. 
\end{itemize}


\begin{table}
\centering
\caption{Answers for message $A$ and $B$ for $(N,K)=(2,2)$ in \cite{sun2017PIRcapacity}. \label{table:2_2}}
\vspace{0.2cm}
\begin{tabular}{|c|c||c|c|}\hline
\multicolumn{2}{|c||}{Requesting $A$}&\multicolumn{2}{c|}{Requesting $B$}\\\hline
Server-1&Server-2&Server-1&Server-2\\\hline\hline
$a_{\Box},b_{\Box},a_\clubsuit+b_\Diamond$&$a_\Diamond,b_\Diamond,a_\heartsuit+b_{\Box}$&$a_{\Box},b_{\Box},a_\Diamond+b_\clubsuit$&$a_\Diamond,b_\Diamond,a_{\Box}+b_\heartsuit$\\\hline
\end{tabular}
\end{table}

%


It is observed that in the simple new code, the message sizes for different queries are allowed to vary, while the code constructed in \cite{sun2017PIRcapacity} uses answers of the same length, regardless of the key realization and the query. 

\subsection{The New PIR Code}
\label{sec:newcode}
The code we propose, which will be referred to as the $N$-ary-indexed PIR code, has the following parameter:
\begin{gather}
L=N-1.
\end{gather}
The query sets at the servers are defined as
\begin{align}
&\mathcal{Q}_n\triangleq \left\{q_{n,0:K-1}\triangleq(q_{n,0},q_{n,1},\ldots,q_{n,K-1})\in \{0,1,\ldots,N-1\}^K\bigg{|} \left(\sum_{k=0}^{K-1} q_{n,k}\right)_N = n \right\},\nonumber\\
&\qquad\qquad\qquad\qquad\qquad\qquad\qquad\qquad\qquad\qquad\qquad\qquad\quad n\in \{0,1,\ldots,N-1\},
\end{align}
where $\left(\cdot\right)_N$ means the modulo $N$ operation. In other words, the queries are length-$K$ vectors, whose elements are in the set $\{0,1,\ldots,N-1\}$; the query set for server-$n$ is all such vectors whose elements sum up to $n$ under modulo $N$. 
It is easy to see that 
\begin{align}
|\mathcal{Q}_n|=N^{K-1},\quad n\in \{0,1,\ldots,N-1\},
\end{align}
since the first $K-1$ digits of the query, {\em i.e.}, $(q_{n,0},q_{n,1},\ldots,q_{n,K-2})$, can take any value in the set $\{0,1,\ldots,N-1\}^{K-1}$, however, for a fixed server-$n$, the last digit is then uniquely determined in the set $\mathcal{Q}_n$.

The sample space of the random key is defined as $\mathcal{F}=\{0,1,\ldots,N-1\}^{K-1}$, and thus the random key $\mathsf{F}$ can be written as
\begin{align}
\mathsf{F}=(\mathsf{F}_0,\mathsf{F}_1,\ldots,\mathsf{F}_{K-2}),
\end{align}
where $\mathsf{F}_k\in \{0,1,\ldots,N-1\}$, $k=0,1,\ldots,K-2$. 
Each message $W_k$, $k\in \{0,1,\ldots,K-1\}$, is a length-$L$ vector and thus by pre-pending a dummy variable $W_{k,0}\triangleq 0$, can be written as
\begin{align}
W_k=(W_{k,0},W_{k,1},\ldots,W_{k,N-1}),
\end{align}
where  $(W_{k,1},\ldots,W_{k,N-1})$ is the true information payload of the message $W_k$.
Without loss of generality, we shall assume $\mathcal{X}=\{0,1,\cdots,|\mathcal{X}|-1\}$, which, together with the modulo addition operation $\oplus$, forms a finite group $(\mathcal{X},\oplus)$.
This includes the particularly attractive choice of $\mathcal{X}=\{0,1\}$,  where each symbol is a bit and the group addition is simply binary XOR, and in this case, the binary group can also be viewed as the binary field. 

We next provide the precise forms of the four coding functions with the parameter and the relevant sets defined above, which constitute the proposed code:
\begin{enumerate}
\item The query function $\phi_n$ for $n\in \{0,1,\ldots,N-1\}$ is 
\begin{gather}
Q^{[k]}_n=\phi_n(k,\mathsf{F})=\left(\mathsf{F}_0,\mathsf{F}_1,\mathsf{F}_{k-1},\left(n-\mathsf{F}^*_{k}\right)_N,\mathsf{F}_{k},\ldots,\mathsf{F}_{K-2}\right), 
\end{gather}
where $\mathsf{F}^*_{k}\triangleq \left(\sum_{i=0}^{K-2}\mathsf{F}_i\right)_N $. In other words, all digits except the $k$-th digit in the query vector are copied from $\mathsf{F}$, while the $k$-th digit is set to match the unique value in the query set at this server. This query can be equivalently written as $Q^{[k]}_{n,0:K-1}$ since it is a length-$K$ vector.
\item The answer length function $\ell_n$ for $n\in \{0,1,\ldots,N-1\}$ is 
\begin{gather}
 \ell_n(n,q)=\left\{
\begin{array}{ll}
0 &(n,q)=(0,(0,0,\ldots,0))\\
1 &\text{otherwise}
\end{array}
\right..
\end{gather}
In other words, there is only one query at the $0$-th server that will induce $\ell_0=0$, while all other queries at all other servers will induce an answer of a single symbol.
\item The answer function $\varphi_n$ for $n\in \{0,1,\ldots,N-1\}$ is 
\begin{align}
&A^{[k]}_n=\varphi_n(Q^{[k]}_{n,0:K-1},W_{0:K-1})\nonumber\\
&=W_{0,Q^{[k]}_{n,0}}\oplus W_{1,Q^{[k]}_{n,1}}\oplus\ldots\oplus W_{K-1,Q^{[k]}_{n,K-1}}\nonumber\\
&=W_{k,\left(n-\mathsf{F}^*_{k}\right)_N}\oplus \left(W_{0,\mathsf{F}_0}\oplus\ldots \oplus W_{k-1,\mathsf{F}_{k-1}}\oplus W_{k+1,\mathsf{F}_k}\oplus\ldots\oplus W_{K-1,\mathsf{F}_{K-2}}\right),
\end{align}
where $\oplus$ is the addition operation in the group $\mathcal{X}$. 
For conciseness, we shall define
\begin{align}
F\triangleq W_{0,\mathsf{F}_0}\oplus\ldots \oplus W_{k-1,\mathsf{F}_{k-1}}\oplus W_{k+1,\mathsf{F}_k}\oplus\ldots\oplus W_{K-1,\mathsf{F}_{K-2}}.
\end{align}
\item The answers from the servers are
\begin{align}
A^{[k]}_n=W_{k,\left(n-\mathsf{F}^*_{k}\right)_N}\oplus F,\quad n\in \{0,1,\ldots,N-1\}.
\end{align}
The message $W_k$ can now be reconstructed by computing
\begin{align}
W_{k,\left(n-\mathsf{F}^*_{k}\right)_N}=A^{[k]}_n\ominus A^{[k]}_{\mathsf{F}^*_{k}}=A^{[k]}_n\ominus F,\quad n\in \{0,1,\ldots,N-1\},\label{eqn:allW}
\end{align}
where $\ominus$ is the subtraction operation in the abelian group $\mathcal{X}$.
\end{enumerate}

The correctness of this code is almost immediate, once we observe that in (\ref{eqn:allW}), as $n$ ranges in the set $\{0,1,\ldots,N-1\}$, the corresponding value $\left(n-\mathsf{F}^*_{k}\right)_N$ exhausts all possible values in $\{0,1,\ldots,N-1\}$ as well. This implies all the elements $W_{k,n}$, $n\in \{0,1,\ldots,N-1\}$ are recovered, and thus the message is correctly reconstructed. 
The privacy of the code is also almost immediate, as for any $k\in \{0,1,\ldots,K-1\}$, $n\in \{0,1,\ldots,N-1\}$, and $q\in\mathcal{Q}_n$,
\begin{align}
\mathbf{Pr}(Q^{[k]}_n=q)=N^{-K+1},
\end{align}
{\em i.e.}, the queries are sent to a server with a uniform distribution on the respective query set. Since at each server, each answer is sent with probability $N^{-K+1}$, and only one answer in server-$0$ has length $0$ while all other answers have length $1$, 
the rate of the code is
\begin{align}
R=\frac{N-1}{(1-N^{-K+1})+(N-1)}=\frac{N-1}{N-N^{-K+1}}=\left(1+\frac{1}{N}+\frac{1}{N^2}+\ldots+\frac{1}{N^{K-1}}\right)^{-1}=C,
\end{align}
{\em i.e.}, achieving the capacity. The upload cost is simply given by 
\begin{align}
N\log_2N^{K-1}=N(K-1)\log_2 N,
\end{align}
which is roughly linear in $K$ for any fixed $N$. 

We summarize the properties of the proposed PIR code construction in the following theorem. 
\begin{theorem}
The $N$-ary-indexed PIR code is correct, privacy-preserving, and capacity-achieving. Among all capacity-achieving uniformly decomposable PIR codes, it has the smallest message size, which is $N-1$. Among all capacity-achieving decomposable PIR  codes, it has the lowest upload cost, which is $N(K-1)\log_2 N$. 
\end{theorem}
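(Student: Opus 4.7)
The plan is to handle the three assertions separately: achievability (correctness, privacy, capacity), then the two converses (minimum message size and minimum upload cost). The achievability part is essentially already in the construction of Section \ref{sec:newcode}, so I would simply collect what has been shown: correctness follows from the reconstruction formula, because as $n$ ranges over $\{0,1,\ldots,N-1\}$ the index $(n-\mathsf{F}^*_k)_N$ cycles through every element of $\{0,1,\ldots,N-1\}$ and hence every message digit $W_{k,j}$ is recovered as $A^{[k]}_n \ominus F$; privacy follows from the uniformity $\mathbf{Pr}(Q^{[k]}_n = q) = N^{-K+1}$ for every $(k,n,q)$; and capacity-achievability is the direct rate calculation in the construction showing $R = C$.

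The heart of the proof is the two converses. My strategy is to revisit the information-theoretic capacity converse of Sun--Jafar and observe that it is a chain of inequalities; when a code achieves capacity, every one of those inequalities must hold with equality. I would trace carefully through that chain, identify the structural consequences of each equality, and distill a list of ``critical properties'' that any capacity-achieving decomposable PIR code must satisfy. Intuitively, these properties take the form of conditional entropy equalities forcing certain collections of answer symbols to be deterministic functions of others given appropriate subsets of messages, and forcing other answer components to be uniformly distributed on $\mathcal{Y}$. This mirrors exactly what the paper's introduction promises to do in Section \ref{sec:optimal}.

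For the message size converse, I would specialize those properties to the class of uniformly decomposable codes, where the hypothesis of uniform preservation of distributions lets me convert the derived entropy equalities into quantitative counting constraints on the number of independent coded components per message, ultimately forcing $L \geq N-1$. For the upload cost converse, I would use the same critical properties together with the privacy constraint to show that each individual query set $\mathcal{Q}_n$ must have cardinality at least $N^{K-1}$, because the queries must support $K$ marginally identical distributions yet produce answers that decompose in the requisite way; summing $\log_2 |\mathcal{Q}_n|$ over $n \in \{0,1,\ldots,N-1\}$ then yields the lower bound $N(K-1)\log_2 N$. Since decomposability is weaker than uniform decomposability, I expect the upload argument to go through without the uniformity hypothesis, matching the theorem's phrasing.

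The main obstacle will be extracting the critical properties cleanly from the capacity converse. The original proof uses a recursion on the number of messages together with delicate conditioning on queries, and tracking equality conditions through every step is where the technical difficulty concentrates. I anticipate that the variable-length-answer nature of the proposed code (one answer in server-$0$ has length $0$) will add a subtle wrinkle, since the original capacity proof implicitly allows such variable lengths but does not exploit them; I will need to handle the expectation $\mathbb{E}(\ell_n)$ carefully so that the equality conditions translate correctly into per-query statements rather than only into aggregate statements. Once the critical properties are established in a usable form, the translation to the concrete bounds $L \geq N-1$ and $\sum_n \log_2 |\mathcal{Q}_n| \geq N(K-1)\log_2 N$ should reduce to a combinatorial counting argument using decomposability and, respectively, uniform decomposability.
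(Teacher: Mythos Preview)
Your proposal is correct and follows essentially the same approach as the paper: achievability is read off from the construction in Section~\ref{sec:newcode}, and the two converses are obtained by tracing equality conditions through the Sun--Jafar capacity converse to extract structural properties (the paper names them \textbf{P1}--\textbf{P3}: independence of the retrieved answers, the interference residuals being deterministic of each other, and independence of the requested-message components), then using those properties together with (uniform) decomposability to force $L\log_2|\mathcal{X}|\ge (N-1)\log_2|\mathcal{Y}|$ and $|\mathcal{Q}_n|\ge N^{K-1}$. The only place where the paper's execution is more specific than your outline is the upload-cost argument, which is not a direct counting of query marginals but an inductive ``distinctness'' construction: starting from one valid query tuple, the privacy constraint and \textbf{P1}--\textbf{P3} are used to generate, message by message, $N^{K-1}$ information-theoretically distinct answer varieties at each server.
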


The optimality in terms of the message size and the upload cost is proved in Section \ref{sec:optimal}. 
The capacity-achieving code in \cite{sun2017PIRcapacity} has a message size of $L=N^K$ and an upload cost of $NK \log_2(\frac{N^K!}{N^{K-1}!})$, while the one in \cite{sun2017optimal} has a message size\footnote{The definition of retrieval rate (\ref{rate_def}) is given in terms of the inverse of the expected number of downloaded symbols (over all random queries), which is in line with the approach taken in \cite{sun2017PIRcapacity}. In \cite{sun2017optimal}, an alternative definition was adopted, where the retrieval rate was defined in terms of the inverse of the maximum number of downloaded symbols (among all possible queries). Under the alternative definition of \cite{sun2017optimal}, the minimum message size was shown to be $N^{K-1}$ for any capacity-achieving codes. In a sense, our result shows that this subtle difference in the problem definition in fact induces a significant difference in terms of the optimal message sizes.} of $L = N^{K-1}$ and an upload cost of $NK \log_2(\frac{N^{K-1}!}{N^{K-2}!})$. Therefore, the proposed code construction is able to provide an exponential order of improvements over the existing ones in the literature.

\subsection{An Example for $(N,K)=(3,3)$}
\label{sec:example}

Here we use $(N,K)=(3,3)$ to illustrate the general code construction. The code will have $L=N-1=2$, and we shall denote $W_0=(a_1, a_2), W_1=(b_1, b_2), W_2=(c_1, c_2)$, where all the elements are in the binary field $\{0,1\}$. As described in the general code construction, we extend these messages by pre-pending one dummy element to each of them, denoted as $a_0=b_0=c_0=0$, to form 
\begin{align}
W_0=(a_0,a_1, a_2), W_1=(b_0,b_1, b_2), W_2=(c_0,c_1, c_2).
\end{align}
In Table \ref{table:3_3}, we provide the query set $\mathcal{Q}_n$ at each server, as well as the corresponding answers.

Let us consider the case where the random key is chosen to be $\mathsf{F}=(0,2)$, and the message being requested is $W_1$, then the three queries sent to the servers are
\begin{align}
q_0=(0,1,2),\quad q_1=(0,2,2),\quad q_2=(0,0,2),
\end{align}
{\em i.e.}, the middle digit in the query is chosen to be the unique value in each query set, and the other two digits are set according to $\mathsf{F}=(0,2)$. The answers are thus
\begin{align}
A_0=a_0\oplus b_1 \oplus c_2=b_1 \oplus c_2,\quad A_1=a_0\oplus b_2 \oplus c_2=b_2 \oplus c_2,\quad A_2=a_0\oplus b_0 \oplus c_2=c_2.
\end{align}
It is clear that $b_1$ and $b_2$ can be recovered from these answers by subtracting $A_2=c_2$ from $A_0$ and $A_1$. The code is also privacy-preserving, since regardless of the message being requested, a query element is being sent with probability $1/9$. The retrieval rate is also easy to compute as
\begin{align}
R=\frac{2}{\frac{1}{9}*2+\frac{8}{9}*3}=\frac{9}{13},
\end{align}
which matches the capacity of this system.

\vspace{0.2cm}
\noindent\textbf{Remark: } The queries in each row of Table \ref{table:3_3} are intentionally arranged to have the first two digits being the same, for ease of inspection.

\begin{table*}[tb!]
\begin{center}
\caption{The query sets and the answers at the servers. \label{table:3_3}}
\vspace{0.2cm}
\begin{tabular}{|c|c||c|c||c|c|}
\hline
\multicolumn{2}{|c||}{Server-0}&\multicolumn{2}{c||}{Server-1}&\multicolumn{2}{c|}{Server-2}\\\hline
${Q}_0$& answers&${Q}_1$& answers&$Q_2$& answers\\\hline
000& $0$                                    &001&$a_0\oplus b_0\oplus c_1$     &002&$a_0\oplus b_0\oplus c_2$\\
012& $a_0\oplus b_1\oplus c_2$   &010&$a_0\oplus b_1\oplus c_0$     &011&$a_0\oplus b_1\oplus c_1$\\
021& $a_0\oplus b_2\oplus c_1$   &022&$a_0\oplus b_2\oplus c_2$     &020&$a_0\oplus b_2\oplus c_0$\\
102& $a_1\oplus b_0\oplus c_2$   &100&$a_1\oplus b_0\oplus c_0$     &101&$a_1\oplus b_0\oplus c_1$\\
111& $a_1\oplus b_1\oplus c_1$   &112&$a_1\oplus b_1\oplus c_2$     &110&$a_1\oplus b_1\oplus c_0$\\
120& $a_1\oplus b_2\oplus c_0$   &121&$a_1\oplus b_2\oplus c_1$     &122&$a_1\oplus b_2\oplus c_2$\\
201& $a_2\oplus b_0\oplus c_1$   &202&$a_2\oplus b_0\oplus c_2$     &200&$a_2\oplus b_0\oplus c_0$\\
210& $a_2\oplus b_1\oplus c_0$   &211&$a_2\oplus b_1\oplus c_1$     &212&$a_2\oplus b_1\oplus c_2$\\
222& $a_2\oplus b_2\oplus c_2$   &220&$a_2\oplus b_2\oplus c_0$     &221&$a_2\oplus b_2\oplus c_1$\\
\hline
\end{tabular}
\end{center}
\end{table*}

\section{Lower Bounding the Message Size and the Upload Cost}
\label{sec:optimal}

The minimum upload cost and the minimum message size are closely related to the retrieval rate of a PIR code. For example, a naive PIR code where everything is downloaded can have upload cost of 0, and message size of 1, however a more efficient PIR code will need to induce a larger message size and a higher upload cost. In this work, we consider the minimum upload cost and the minimum message size when the retrieval rate is maximized and when the codes are decomposable, {\em i.e.}, capacity-achieving decomposable codes. We will show, through a delicate set of relations among the coding function matrices $G_n^{(q)}$'s, that the capacity-achieving requirement forces the PIR codes to have certain algebraic structure, which can be utilized to derive the desired lower bounds. 

\subsection{Properties of Capacity-Achieving Decomposable Codes}

We first provide a detailed analysis of capacity-achieving codes, from which three important properties are derived, given in two lemmas. The analysis is a refinement of the converse proof given in \cite{sun2017PIRcapacity}, however, with the emphasis on the necessary conditions for optimal codes. A similar approach was used in \cite{tian2017matched} to analyze optimal joint source-channel codes, and in \cite{Tian:16Computer} to facilitate reverse-engineering code designs. 
\begin{lemma}
\label{lemma:first}
For any PIR code, we have 
\begin{align}
I\left(W_{0:k-1,k+1:K-1};A^{[k]}_{0:N-1}\Big{|}W_k,\mathsf{F}\right)\leq L(1/R-1)\log_2\left|\mathcal{X}\right|,\quad k\in \{0,1,\ldots,K-1\}.\label{eqn:P1}
\end{align}
Moreover, for any PIR code that the equality holds for all $k\in \{0,1,\ldots,K-1\}$ in (\ref{eqn:P1}), let $q_{0:N-1}=(q_0,q_1,\ldots,q_{N-1})$ be a set of queries for which $\mathbf{Pr}(Q^{[k]}_{0:N-1}=q_{0:N-1})>0$ for some $k\in \{0,1,\ldots,K-1\}$, then the code must have 
\begin{enumerate}
\item[\textbf{P1}.]  Independence of the retrieved data: the $N$ random variables $A^{(q_0)}_{0},A^{(q_1)}_{1},\ldots,A^{(q_{N-1})}_{N-1}$ are mutually independent, where $A^{(q_n)}_{n}$ is the answer from server-$n$ when the query  $Q_n^{[k]}=q_n$.
\end{enumerate}
\end{lemma}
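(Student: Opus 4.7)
The plan is to establish the inequality in (\ref{eqn:P1}) by a short entropy computation, then unpack exactly when each inequality in the chain is tight. Since $A^{[k]}_{0:N-1}$ is a deterministic function of $(k, \mathsf{F}, W_{0:K-1})$ through the query and answer functions in Definition \ref{def:problem}, I have $H(A^{[k]}_{0:N-1} \mid W_{0:K-1}, \mathsf{F}) = 0$, so the left side of (\ref{eqn:P1}) collapses to $H(A^{[k]}_{0:N-1} \mid W_k, \mathsf{F})$. Expanding the joint entropy, invoking the correctness requirement to get $H(W_k \mid A^{[k]}_{0:N-1}, \mathsf{F}) = 0$, and using $H(W_k \mid \mathsf{F}) = H(W_k) = L\log_2|\mathcal{X}|$ (since messages are independent of the random key), this becomes $H(A^{[k]}_{0:N-1} \mid \mathsf{F}) - L\log_2|\mathcal{X}|$. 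Finally, the standard chain $H(A^{[k]}_{0:N-1} \mid \mathsf{F}) \leq \sum_{n} H(A^{[k]}_n \mid \mathsf{F}) \leq \log_2|\mathcal{Y}| \sum_{n} \Expt(\ell_n) = L\log_2|\mathcal{X}|/R$ closes the proof of (\ref{eqn:P1}).

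For the equality claim, I would observe that the two inequalities in the last chain must be simultaneously tight: (i) conditional on $\mathsf{F}$, the $N$ answers must be mutually independent, and (ii) conditional on $\mathsf{F}$, each $A^{[k]}_n$ must be uniformly distributed on $\mathcal{Y}^{\ell_n(Q^{[k]}_n)}$. Because $H(\,\cdot\mid\mathsf{F})$ is an average with nonnegative summands over realizations of $\mathsf{F}$, the equality in (i) descends to every realization $\mathbf{f}$ in the support: for each such $\mathbf{f}$, the answers are mutually independent given $\mathsf{F}=\mathbf{f}$. That realization deterministically pins down the query tuple $q_{0:N-1}=(\phi_0(k,\mathbf{f}),\ldots,\phi_{N-1}(k,\mathbf{f}))$, which is therefore in the support of $Q^{[k]}_{0:N-1}$; conditioned on $\mathsf{F}=\mathbf{f}$, the answer $A^{[k]}_n$ becomes the deterministic function $\varphi_n(q_n, W_{0:K-1}) = A^{(q_n)}_n$ of the messages alone. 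Thus the conditional independence translates verbatim into property P1 for every tuple $q_{0:N-1}$ reached by some $\mathbf{f}$, which is exactly the set of tuples with $\mathbf{Pr}(Q^{[k]}_{0:N-1}=q_{0:N-1})>0$.

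The hard part is not any single inequality, all of which are routine refinements of the converse argument in \cite{sun2017PIRcapacity}, but rather the bookkeeping in the reverse direction: one must be careful that the averaged equality $H(A^{[k]}_{0:N-1}\mid\mathsf{F}) = \sum_{n} H(A^{[k]}_n\mid\mathsf{F})$ genuinely forces pointwise independence at every $\mathbf{f}$ in the support (not merely almost everywhere in an abstract sense), and that every positive-probability query tuple is indeed reached by at least one such $\mathbf{f}$. Once this is checked, P1 emerges with no further work and is ready to be fed into the subsequent analysis of the algebraic structure of the coding matrices $G^{(q)}_n$ in Section \ref{sec:optimal}.
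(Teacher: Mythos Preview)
Your proof is correct and follows essentially the same route as the paper's: both reduce the mutual information to $H(A^{[k]}_{0:N-1}\mid\mathsf{F}) - L\log_2|\mathcal{X}|$ via correctness and the independence of messages from the key, then bound the first term by conditional subadditivity followed by the alphabet-size bound on each answer. The paper conditions on the query tuple $Q^{[k]}_{0:N-1}=q_{0:N-1}$ rather than on $\mathsf{F}=\mathbf{f}$ when extracting \textbf{P1}, but since the query tuple is a deterministic function of $\mathbf{f}$ and the answers depend on $\mathsf{F}$ only through the queries, the two conditionings are equivalent and your bookkeeping goes through.
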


The proof of this lemma is given in the appendix. The property \textit{\textbf{P1}} is obtained by setting the inequality (\ref{eqn:P1}) to equality, which forces the intermediate steps to also become inequality, and then extracting the independence implied by such information theoretic equality. 

\vspace{0.3cm}
\noindent\textbf{Remark: } For decomposable codes, we can further write
\begin{align}
\left(A^{(q_0)}_{0},A^{(q_1)}_{1},\ldots,A^{(q_{N-1})}_{N-1}\right)=\left(W_{0:K-1}\cdot G^{(q_0)}_0,W_{0:K-1}\cdot G^{(q_1)}_1,\ldots,W_{0:K-1}\cdot G^{(q_{N-1})}_{N-1}\right).
\end{align}
Also note that for linear codes, the independence relation given above implies that the columns of the matrices $G^{(q_0)}_0,G^{(q_1)}_1,\ldots,G^{(q_{N-1})}_{N-1}$ are linearly independent. 
Recall that for decomposable codes, the answer for a query $Q_n=q$ at server-$n$ can be written as $W_{0:K-1}\cdot G^{(q)}_n$, or more concisely, sometimes represented by the coding function matrix $G^{(q)}_n$ alone. The next lemma involves submatrices of $G^{(q)}_n$, with the rows corresponding to a subset of the messages removed, say $\{W_i, i\in \mathcal{A}\}$; we shall write such a submatrix as $G^{(q)}_{n|\mathcal{A}}$. For example, if $(N,K)=(3,3)$, and $\mathcal{A}=1$, then $G^{(1)}_{1|1}$ is the submatrix of $G^{(1)}_{1}$ with the middle row corresponding to the message $W_1$ removed. 

\begin{lemma}
\label{lemma:second}
Let $\pi:\{0,1,\ldots,K-1\}\rightarrow\{0,1,\ldots,K-1\}$ be a permutation function. 
For any PIR code, for any $k\in \{1,2,\ldots,K-1\}$,
\begin{align}
NI\left(W_{\pi(k:K-1)};A_{0:N-1}^{[\pi(k-1)]}\Big{|}W_{\pi(0:k-1)},\mathsf{F}\right)\geq I\left(W_{\pi(k+1:K-1)};A_{0:N-1}^{[\pi(k)]}\Big{|}W_{\pi(0:k)},\mathsf{F}\right)+L\log_2|\mathcal{X}|.\label{eqn:P2}
\end{align}
Moreover, for any decomposable code for which the equality holds for any $k$ and $\pi(\cdot)$ in (\ref{eqn:P2}), let $q_{0:N-1}=(q_0,q_1,\ldots,q_{N-1})$ be a set of queries for which $\mathbf{Pr}(Q^{[k]}_{0:N-1}=q_{0:N-1})>0$ for the query of the message $W_k$, and $G^{(q_0)}_{0},G^{(q_1)}_{1},\ldots,G^{(q_{N-1})}_{N-1}$ be the corresponding answer coding matrices, then
\begin{enumerate} 
\item[\textbf{P2}.] Identical information for the residuals: the $N$ random variables 
$$W_{0:k-1,k+1:K-1}\cdot G^{(q_0)}_{0|k}, W_{0:k-1,k+1:K-1}\cdot G^{(q_1)}_{1|k}, ..., W_{0:k-1,k+1:K-1}\cdot G^{(q_{N-1})}_{N-1|k}$$ are deterministic of each other;
\item[\textbf{P3}.] Independence of the requested message signals: the random variables 
$$W_k\cdot G^{(q_0)}_{0|0:k-1,k+1:K-1},W_k\cdot G^{(q_1)}_{1|0:k-1,k+1:K-1},\ldots,W_k\cdot G^{(q_{N-1})}_{N-1|0:k-1,k+1:K-1}$$
are independent.
\end{enumerate}
\end{lemma}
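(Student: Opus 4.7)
The plan is to derive the inequality (\ref{eqn:P2}) by a chain of four elementary information-theoretic manipulations, and then to read off \textbf{P2} and \textbf{P3} from the equality cases of two specific steps. To keep the notation clean, I denote the parameter of the inequality by $j$ (reserving $k$ for the message index appearing in \textbf{P2} and \textbf{P3}).

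Step (a), rewriting the RHS: since messages are i.i.d.\ and independent of $\mathsf{F}$, $H(W_{\pi(j)}|W_{\pi(0:j-1)},\mathsf{F})=L\log_2|\mathcal{X}|$, and by correctness $H(W_{\pi(j)}|A_{0:N-1}^{[\pi(j)]},\mathsf{F})=0$, so $L\log_2|\mathcal{X}|=I(W_{\pi(j)};A_{0:N-1}^{[\pi(j)]}|W_{\pi(0:j-1)},\mathsf{F})$; combined with the chain rule, the RHS becomes $I(W_{\pi(j:K-1)};A_{0:N-1}^{[\pi(j)]}|W_{\pi(0:j-1)},\mathsf{F})$. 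Step (b), subadditivity across servers: since $\mathsf{F}$ determines every query and hence (together with the messages) every answer, $H(A_n^{[\pi(j)]}|W_{0:K-1},\mathsf{F})=0$ for each $n$, giving $I(W_{\pi(j:K-1)};A_{0:N-1}^{[\pi(j)]}|W_{\pi(0:j-1)},\mathsf{F})\leq\sum_n I(W_{\pi(j:K-1)};A_n^{[\pi(j)]}|W_{\pi(0:j-1)},\mathsf{F})$. Step (c), privacy: for each $n$, $I(W_{\pi(j:K-1)};A_n^{[\pi(j)]}|W_{\pi(0:j-1)},\mathsf{F})=I(W_{\pi(j:K-1)};A_n^{[\pi(j-1)]}|W_{\pi(0:j-1)},\mathsf{F})$. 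Step (d), monotonicity: enlarge the single-server answer to the full tuple, $I(W_{\pi(j:K-1)};A_n^{[\pi(j-1)]}|W_{\pi(0:j-1)},\mathsf{F})\leq I(W_{\pi(j:K-1)};A_{0:N-1}^{[\pi(j-1)]}|W_{\pi(0:j-1)},\mathsf{F})$; summing over $n$ yields the LHS.

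The privacy step (c) is the one that requires care. Conditioning on $\mathsf{F}=f$ makes each query deterministic, so after regrouping the $\mathsf{F}$ average by the induced query value, $H(A_n^{[\pi(j)]}|W_{\pi(0:j-1)},\mathsf{F})=\sum_q \mathbf{Pr}(Q_n^{[\pi(j)]}=q)\,H(\varphi_n(q,W_{0:K-1})|W_{\pi(0:j-1)})$, and the privacy condition $\mathbf{Pr}(Q_n^{[\pi(j)]}=q)=\mathbf{Pr}(Q_n^{[\pi(j-1)]}=q)$ allows the index substitution. The subtlety is that privacy is a statement about marginal query distributions while the chain's entropies carry $\mathsf{F}$-conditioning; the regrouping bridges the two.

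For the equality case, equality throughout the chain forces equality in both (b) and (d). Equality in (b) is the condition that $A_0^{[\pi(j)]},\ldots,A_{N-1}^{[\pi(j)]}$ are mutually independent given $W_{\pi(0:j-1)}$ and $\mathsf{F}$. Applying this with $j=K-1$ and a permutation $\pi$ satisfying $\pi(K-1)=k$ (so $\pi(0:K-2)=\{0,\ldots,K-1\}\setminus\{k\}$), and fixing $\mathsf{F}$ to a value that realizes the query tuple $q_{0:N-1}$ for $W_k$, the decomposable structure gives $A_n^{(q_n)}=(W_{0:k-1,k+1:K-1}\cdot G^{(q_n)}_{n|k})\oplus(W_k\cdot G^{(q_n)}_{n|0:k-1,k+1:K-1})$, the first summand of which is deterministic under the conditioning, so the mutual independence across $n$ reduces to the independence of $W_k\cdot G^{(q_n)}_{n|0:k-1,k+1:K-1}$, which is \textbf{P3}. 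Equality in (d) is the condition that, for each $n$, $A_n^{[\pi(j-1)]}$ is a sufficient statistic for $W_{\pi(j:K-1)}$ given $W_{\pi(0:j-1)},\mathsf{F}$. Applying this with $j=1$ and $\pi(0)=k$, the decomposable structure makes the $W_k$-contribution deterministic under the conditioning, reducing the condition to $I(W_{0:k-1,k+1:K-1};W_{0:k-1,k+1:K-1}\cdot G^{(q_m)}_{m|k}\mid W_{0:k-1,k+1:K-1}\cdot G^{(q_n)}_{n|k})=0$ for $m\neq n$. Since $W_{0:k-1,k+1:K-1}\cdot G^{(q_m)}_{m|k}$ is itself a function of $W_{0:k-1,k+1:K-1}$, this forces it to be a deterministic function of $W_{0:k-1,k+1:K-1}\cdot G^{(q_n)}_{n|k}$, which is exactly \textbf{P2}.

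I expect the hardest parts to be the careful bookkeeping in step (c), reconciling marginal query privacy with $\mathsf{F}$-conditioned entropies, and the translation of the information-theoretic equality conditions into the clean algebraic statements \textbf{P2} and \textbf{P3}, which relies on the decomposable structure and on choosing $\pi$ judiciously to expose the correct conditioning for each property.
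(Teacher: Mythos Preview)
Your proposal is correct and follows essentially the same argument as the paper's own proof, just traversed in the opposite direction: the paper starts from the LHS and descends via monotonicity (your step (d)), privacy (your step (c)), subadditivity (your step (b)), and correctness (your step (a)) to the RHS, whereas you build up from the RHS to the LHS. Your extraction of \textbf{P2} from the monotonicity equality with $\pi(0)=k$ and of \textbf{P3} from the subadditivity equality with $\pi(K-1)=k$ matches the paper's analysis of the equality conditions in its steps (a) and (d) respectively, including the use of decomposability to strip off the conditioned message component.
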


The proof of this lemma can be found in the appendix. 

\noindent\textbf{Remark: } The property of decomposable codes was used in the proof of Lemma \ref{lemma:second}, where the answers are decomposed into separate components according to the messages $W_k$'s, with which relations among these answers are derived. Such decomposition does not apply on other code classes in general, and thus the proof cannot be carried through using the same argument.    

\begin{theorem}
\label{theorem:minmessage}
Any capacity-achieving decomposable PIR code must have the properties \textit{\textbf{P1}}-\textit{\textbf{P3}}. 
\end{theorem}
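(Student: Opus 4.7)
The plan is to combine Lemmas \ref{lemma:first} and \ref{lemma:second} into a single chain of inequalities that reproduces the capacity converse $R \leq C$, and then to exploit the fact that a capacity-achieving code must saturate every link of that chain; the ``moreover'' clauses of the two lemmas will then deliver P1--P3 directly.

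Concretely, I would fix an arbitrary permutation $\pi$ of $\{0,1,\ldots,K-1\}$ and iterate the recursion (\ref{eqn:P2}) from $k=1$ up to $k=K-1$. Dividing by successive powers of $N$ and telescoping gives
\[
I(W_{\pi(1:K-1)};A_{0:N-1}^{[\pi(0)]}\mid \mathsf{F}) \;\geq\; L\log_2|\mathcal{X}|\sum_{j=1}^{K-1}N^{-j}.
\]
Since $W_{\pi(0)}$ is independent of $(W_{\pi(1:K-1)},\mathsf{F})$, inserting $W_{\pi(0)}$ as an additional conditioning variable can only enlarge the mutual information, so the left side is bounded above by $I(W_{\pi(1:K-1)};A_{0:N-1}^{[\pi(0)]}\mid W_{\pi(0)},\mathsf{F})$. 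Now Lemma \ref{lemma:first}, specialised to $k=\pi(0)$, bounds this latter quantity above by $L(1/R-1)\log_2|\mathcal{X}|$. Chaining these three estimates yields $1/R \geq 1+1/N+\cdots+1/N^{K-1} = 1/C$, which is the familiar capacity upper bound.

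For a capacity-achieving code, $R=C$, and so the composite chain must close: every one of its links must be an equality. In particular, the instance of (\ref{eqn:P1}) invoked at $k=\pi(0)$ holds with equality, and each of the $K-1$ instances of (\ref{eqn:P2}) used in the telescoping holds with equality at the corresponding $(k,\pi)$. Because $\pi$ was arbitrary, letting $\pi(0)$ range over $\{0,\ldots,K-1\}$ forces equality in (\ref{eqn:P1}) for every $k$, and letting $\pi$ range over all permutations forces equality in (\ref{eqn:P2}) for every admissible pair $(k,\pi)$. Invoking the ``moreover'' clauses of Lemmas \ref{lemma:first} and \ref{lemma:second} then delivers P1 (from Lemma \ref{lemma:first}) and P2, P3 (from Lemma \ref{lemma:second}).

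The main---and fairly mild---obstacle is simply the bookkeeping of the equality sweep: one must verify that for each admissible pair $(k^{\star},\pi^{\star})$ appearing in Lemma \ref{lemma:second}, some chain built by the argument really does force equality on precisely that instance of (\ref{eqn:P2}). This is automatic, because the chain seeded by $\pi=\pi^{\star}$ visits $(k,\pi^{\star})$ for every $k\in\{1,\ldots,K-1\}$; no new information-theoretic identities are needed beyond those already supplied by Lemmas \ref{lemma:first} and \ref{lemma:second}.
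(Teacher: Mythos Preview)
Your approach is essentially the same as the paper's: chain Lemma~\ref{lemma:first} with iterated applications of Lemma~\ref{lemma:second} to recover the capacity converse, then force every link to equality when $R=C$ and invoke the two lemmas' ``moreover'' clauses. One small slip: the telescoped bound from Lemma~\ref{lemma:second} already carries the conditioning on $W_{\pi(0)}$ (the $k=1$ instance of (\ref{eqn:P2}) has $W_{\pi(0:0)}=W_{\pi(0)}$ in the conditioning), so your displayed inequality should read $I\bigl(W_{\pi(1:K-1)};A_{0:N-1}^{[\pi(0)]}\,\big|\,W_{\pi(0)},\mathsf{F}\bigr)\geq L\log_2|\mathcal{X}|\sum_{j=1}^{K-1}N^{-j}$, which makes the subsequent ``insert $W_{\pi(0)}$'' step redundant rather than necessary.
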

\begin{proof}
Let $\pi: \{0,1,\ldots,K-1\}\rightarrow\{0,1,\ldots,K-1\}$ be a permutation. Starting from Lemma \ref{lemma:first}, we can write
\begin{align}
\frac{L}{R}-L&\geq I\left(W_{\pi(1:K-1)};A_{0:N-1}^{[\pi(0)]}\Big{|}W_{\pi(0)},\mathsf{F}\right)\nonumber\\
&\geq \frac{L}{N}+\frac{1}{N}I\left(W_{\pi(2:K-1)};A_{0:N-1}^{[\pi(1)]}\Big{|}W_{\pi(0:1)},\mathsf{F}\right)\nonumber\\
&\geq ...\nonumber\\
&\geq L\left(\frac{1}{N}+\ldots+\frac{1}{N^{K-1}}\right),
\end{align}
where all the other inequalities are by recursively applying Lemma \ref{lemma:second}, and it follows that $R\leq C$. For any decomposable code that satisfies $R=C$, all the inequalities in Lemma \ref{lemma:first} and Lemma \ref{lemma:second} must be equality for any permutation $\pi$, and according to the lemmas, such decomposable codes must have properties \textit{\textbf{P1}}-\textit{\textbf{P3}}.
\end{proof}
 
\subsection{Minimum Message Size}

We have the following theorem, which provides a lower bound on the minimum message size for capacity-achieving uniformly decomposable codes.

\begin{theorem}
The message size of any uniformly decomposable capacity-achieving PIR code is greater than or equal to $(N-1)\log_2|\mathcal{Y}|$; in particular, it must be greater than or equal to $(N-1)$.
\end{theorem}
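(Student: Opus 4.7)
The plan is to combine properties \textbf{P1}--\textbf{P3} (which hold by the preceding theorem) with the additive decomposition $A^{(q_n)}_n = S_n \oplus T_n$, where $S_n := W_k \cdot G^{(q_n)}_{n|0:k-1,k+1:K-1}$ depends only on $W_k$ and $T_n := W_{0:k-1,k+1:K-1}\cdot G^{(q_n)}_{n|k}$ depends only on the other messages. Fix any valid pair $(k,q_{0:N-1})$ and any $f \in \mathcal{F}$ inducing that tuple. Correctness forces the map $W_k\mapsto(S_0,\ldots,S_{N-1})$ to be injective, since if two distinct values of $W_k$ produced the same $S$-tuple, then combined with any common value of $W_{0:k-1,k+1:K-1}$ they would yield identical answer tuples, violating correctness. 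Hence $H(S_{0:N-1}\mid F=f) = H(W_k) = L\log_2|\mathcal{X}|$, and by \textbf{P3} this equals $\sum_n H(S_n\mid F=f)$. Uniform decomposability forces each coordinate of $S_n$ to be either constant in $W_k$ or uniformly distributed on $\mathcal{Y}$, so $H(S_n\mid F=f) \in \{0\} \cup [\log_2|\mathcal{Y}|,\infty)$, which yields
$$L\log_2|\mathcal{X}| \;\ge\; N^*(q,k)\,\log_2|\mathcal{Y}|, \qquad N^*(q,k) := |\{n : H(S_n\mid F=f) > 0\}|.$$

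The central step, and the main obstacle, is to exhibit at least one valid $(q,k)$ with $N^*(q,k)\ge N-1$. I would argue by contradiction: assume $N^*(q,k)\le N-2$ at every valid pair. Then one can pick two servers $n_1 \ne n_2$ at which $S_{n_i}$ is constant, so $A_{n_i} = c_{n_i} \oplus T_{n_i}$ depends only on $T_0$ through the bijections furnished by \textbf{P2} (two-sided deterministic dependence is equivalent to bijection on the supports). Composing these bijections expresses $A_{n_2}$ as a deterministic function of $A_{n_1}$, i.e.\ $H(A_{n_2}\mid A_{n_1}) = 0$; combined with $A_{n_1}\perp A_{n_2}$ from \textbf{P1}, the identity $H(A_{n_2}) = H(A_{n_2}\mid A_{n_1}) + I(A_{n_1}; A_{n_2})$ collapses to $0$. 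Hence $A_{n_2}$, $T_{n_2}$, and therefore $T_0$ are all almost-surely constant at this $f$.

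I would then propagate this constancy to the whole code via privacy. The identity $T_0 = \bigoplus_{k'\ne k} \varphi^{(q_0)}_{0,\cdot,k'}(W_{k'})$ writes a constant as a sum of independent summands, so each $\varphi^{(q_0)}_{0,\cdot,k'}$ must itself be constant as a function of $W_{k'}$. Privacy makes the marginal distribution of $Q_0$ the same under every requested message, so any $q_0$ with positive marginal probability lies in a valid query tuple for every $k$; applying the argument once per $k$ and taking the union (which exhausts all indices $k'$ once $K\ge 2$) shows that $\varphi^{(q_0)}_{0,\cdot,k'}$ is constant for every $k'$, so $A^{(q_0)}_0$ is a constant function of $W_{0:K-1}$. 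The same reasoning at each server $n$ (using that \textbf{P2} transports the constancy of $T_0$ to every $T_n$) forces every answer to be a constant, giving $R=0$ and contradicting $R=C>0$. Therefore $N^*(q,k) \ge N-1$ at some valid pair, and the inequality of the first paragraph delivers $L\log_2|\mathcal{X}| \ge (N-1)\log_2|\mathcal{Y}|$; since $|\mathcal{Y}|\ge 2$, this also gives $L\log_2|\mathcal{X}| \ge N-1$.
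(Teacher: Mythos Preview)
Your proof is correct and rests on the same mechanism as the paper's: the contradiction between \textbf{P1} (independence of answers) and \textbf{P2} (mutual determinacy of the interference terms $T_n$) whenever two of the signal terms $S_n$ are constant, combined with the $\log_2|\mathcal{Y}|$ granularity forced by uniform decomposability and the additivity from \textbf{P3}.

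The organizational difference is worth noting. The paper first \emph{selects} a single query tuple at which the interference is non-degenerate (arguing that otherwise the answers depend only on $W_k$, which together with privacy would also kill correctness for other messages), and then localizes the entire contradiction at that tuple: two trivial $S_n$'s would make $A_{n_1},A_{n_2}$ deterministic of each other while independent and of positive entropy, which is impossible. You instead take the global hypothesis $N^*(q,k)\le N-2$ for every valid pair, deduce that the interference is degenerate at every such pair, and then \emph{propagate} via privacy to conclude that every answer at every server is constant. Both routes land on the same contradiction with correctness; the paper's pre-selection of a ``good'' tuple simply short-circuits your propagation step. A small side remark: your injectivity claim $H(S_{0:N-1})=L\log_2|\mathcal{X}|$ is true but not needed---the inequality $H(S_{0:N-1})\le H(W_k)$ (immediate since $S_{0:N-1}$ is a function of $W_k$) already suffices for the final bound.
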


\vspace{0.2cm}
\noindent\textbf{Remark:} Clearly this implies that the standard linear codes defined on finite fields are lower bounded by the same values. Note also that the bound $(N-1)\log_2|\mathcal{Y}|$ is dependent on $\mathcal{Y}$ but not $\mathcal{X}$, which reflects the fact that the representation of the message is of little fundamental importance because we can always use an equivalent representation.  

\begin{proof}
Let us consider a capacity-achieving uniformly decomposable PIR code, and the request to retrieve the message $W_k$. Recall property \textit{\textbf{P2}} which states that $W_{0:k-1,k+1:K-1}\cdot G^{(q_n)}_{n|k}$, $n\in\{0,1,\ldots,N-1\}$, are deterministic functions of each other. There must be a set of queries $q_{0:N-1}$ with non-zero probability such that
\begin{align}
H\left(W_{0:k-1,k+1:K-1}\cdot G^{(q_n)}_{n|k}\right)\neq 0,\quad n=\{0,1,\ldots,N-1\},\label{eqn:nonzero}
\end{align}
because otherwise, all answers will have the form
\begin{align}
W_{0:K-1}\cdot G^{(q_n)}_{n}=&W_{k}\cdot G^{(q_n)}_{n|0:k-1,k+1:K+1}\oplus \Delta,\quad n=\{0,1,\ldots,N-1\},
\end{align}
where $\Delta\in\mathcal{Y}$ is a constant; this would imply that the answers only involve the message $W_k$ but not other messages, but such answers clearly cannot be both private and correct. 

With such a set of queries $q_{0:N-1}$ that (\ref{eqn:nonzero}) holds, consider property \textit{\textbf{P3}}, which states that 
\begin{align}
W_k\cdot G^{(q_0)}_{0|0:k-1,k+1:K-1},W_k\cdot G^{(q_1)}_{1|0:k-1,k+1:K-1},\ldots,W_k\cdot G^{(q_{N-1})}_{N-1|0:k-1,k+1:K-1},\label{eqn:Wkterms}
\end{align}
are independent, and our aim is to show that no more than one of their entropies can be zero. To see this, assume otherwise, {\em i.e., } at least two of the entropies are zero. Without loss of generality, let us assume that 
\begin{align}
H\left(W_k\cdot G^{(q_0)}_{0|0:k-1,k+1:K-1}\right)=H\left(W_k\cdot G^{(q_1)}_{1|0:k-1,k+1:K-1}\right)=0,
\end{align}
implying that both $W_k\cdot G^{(q_0)}_{0|0:k-1,k+1:K-1}$ and $W_k\cdot G^{(q_1)}_{1|0:k-1,k+1:K-1}$ in fact take a fixed value, independent of the value of $W_k$.  However, this further implies that the retrieved messages from server-0 and server-1 are
\begin{align}
W_{0:K-1}\cdot G^{(q_0)}_{0}=&W_{0:k-1,k+1:K-1}\cdot G^{(q_0)}_{0|k}\oplus\Delta_1,\nonumber\\
W_{0:K-1}\cdot G^{(q_1)}_{1}=&W_{0:k-1,k+1:K-1}\cdot G^{(q_1)}_{1|k}\oplus\Delta_2, \label{eqn:twoterms}
\end{align}
where $\Delta_1$ and $\Delta_2$ are two constants in the abelian group $\mathcal{Y}$. Because of property \textit{\textbf{P2}}, the two random variables in (\ref{eqn:twoterms}) are in fact deterministic of each other. However this contradicts property \textit{\textbf{P1}} which states that the retrieved contents are independent (recall that their entropies are not zero). Thus we can conclude that at least $N-1$ of the entropies of the terms in (\ref{eqn:Wkterms}) are not zero. Because the function $\varphi^{(q)}(n,i,k)$ induces a uniform probability distribution on the coded symbol alphabet $\mathcal{Y}$, and moreover, by the independence property of \textit{\textbf{P3}}, we can now conclude that the message size must be greater than or equal to $(N-1)\log_2|\mathcal{Y}|$. Since any meaningful alphabet $\mathcal{Y}$ must satisfy $|\mathcal{Y}|\geq 2$, the message size must be greater than or equal to $N-1$.  The proof is thus complete.
\end{proof}

\noindent\textbf{Remark: } The property of uniformly decomposable codes is only invoked during the proof in the last step, which requires the component functions to induce a uniform distribution on the coded alphabet.

\subsection{Minimum Upload Cost}
\begin{theorem}
\label{theorem:upload}
The upload cost of any capacity-achieving decomposable PIR code is greater than or equal to $N(K-1)\log N$. 
\end{theorem}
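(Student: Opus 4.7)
My plan is to leverage the three structural properties (P1, P2, P3) established in Theorem \ref{theorem:minmessage} for capacity-achieving decomposable codes and argue that every server's query set must satisfy $|\mathcal{Q}_n| \geq N^{K-1}$, after which summing over $n$ immediately yields the claimed bound $\sum_{n=0}^{N-1} \log_2 |\mathcal{Q}_n| \geq N(K-1)\log_2 N$.

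The first step is to apply Theorem \ref{theorem:minmessage}: the hypothesis of capacity achievement, together with decomposability, forces P1, P2, and P3 to hold for every positive-probability query tuple. I would simultaneously use the privacy constraint, which guarantees that the marginal distribution of $Q_n^{[k]}$ over $\mathcal{Q}_n$ is identical across the message index $k$. Consequently, it suffices to count the distinct queries employed at each server when retrieving any single fixed message, say $W_0$, since the same marginal occurs for every other $W_k$.

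The second step is the core of the argument: showing that for each server $n$, at least $N^{K-1}$ distinct queries must appear in the support of $Q_n^{[0]}$. The plan is to associate each positive-probability tuple $(q_0,\ldots,q_{N-1})$ with a ``scheme signature'' comprising the interference-alignment data dictated by P2 (the deterministic bijections relating $W_{1:K-1} \cdot G_{n|0}^{(q_n)}$ across $n$) and the signal-decomposition data dictated by P3 (the independent partition of $W_0$'s information among the $N$ coded contributions $W_0 \cdot G_{n|1:K-1}^{(q_n)}$). Correctness forces the signal parts to jointly carry $L\log_2|\mathcal{X}|$ bits of information about $W_0$, while P1 combined with the capacity identity $R=C$ pins down the interference budget at $L\!\left(1/N + \cdots + 1/N^{K-1}\right)\log_2|\mathcal{X}|$. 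Enumerating the distinct scheme signatures consistent with these constraints should yield at least $N^{K-1}$ distinct tuples, and since privacy equalizes the server marginals across messages, this count lower bounds $|\mathcal{Q}_n|$.

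The main obstacle is executing this counting step cleanly: decomposable codes are defined over arbitrary finite abelian groups rather than finite fields, so the familiar rank/nullity machinery of linear algebra is not directly available. I expect the count to proceed through information-theoretic surrogates (conditional entropies and the independence/determinism statements in P1--P3) rather than algebraic dimensions, perhaps mirroring the recursive $1/N$-shrinking structure of the capacity converse itself to reduce the $K$-message count to a $(K-1)$-message count at each server. Once $|\mathcal{Q}_n| \geq N^{K-1}$ is established, the proof concludes in one line by summing over $n$.
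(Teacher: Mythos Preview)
Your proposal correctly identifies the target inequality $|\mathcal{Q}_n|\geq N^{K-1}$ and correctly invokes the structural properties \textbf{P1}--\textbf{P3}, but the core counting step is left as a placeholder rather than an argument, and the mechanism you sketch does not match what actually drives the bound.

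The paper's proof does not count ``scheme signatures'' associated to query tuples for a single fixed message. Instead, it exploits privacy \emph{iteratively across messages} to build a tree of distinct answers. Starting from one positive-probability tuple for $W_0$ (chosen so that the $W_{K-1}$-interference component is non-null), the $N$ answers at the $N$ servers have pairwise distinct $W_0$-components by \textbf{P3}. Privacy then says each of these $N$ answers must also occur in some tuple for $W_1$; within each such tuple, \textbf{P3} again forces the $W_1$-components to be distinct, while across tuples the $W_0$-components differ by the previous step. This gives $N^2$ answers whose $(W_0,W_1)$-components are all information-theoretically distinct. Repeating through $W_2,\ldots,W_{K-2}$ yields $N^{K-1}$ answers with distinct $W_{0:K-2}$-components, and a final (more delicate) step for $W_{K-1}$ handles the case where the interference may vanish. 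Throughout, \textbf{P2} is used to propagate the non-null interference condition from one server's answer to all servers in the same tuple, which is what guarantees that at each branching step at most one of the new components can be trivial.

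What your proposal is missing is precisely this privacy-driven branching: you restrict attention to tuples for a single message $W_0$, but there is no a priori reason that \emph{those} tuples alone exhibit $N^{K-1}$ distinct queries per server. The multiplicity comes from the fact that every answer used for $W_0$ must, by privacy, also be an answer used for $W_1$, and so on, and that \textbf{P3} forces new distinctness at each such reuse. Your ``recursive $1/N$-shrinking'' intuition is inverted; the argument is a recursive $N$-fold \emph{expansion} of the set of distinct answers. You will also need the paper's notion of information-theoretic distinctness (two answers are distinct if neither determines the other) and the careful bookkeeping at the final message, where the non-null interference hypothesis is no longer available.
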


We need the following notion of distinctness in the proof.
\begin{definition}
Two random variables $A$ and $B$ are called information-theoretically distinct, or simply distinct, if $I(A;B)<\max(H(A),H(B))$.
\end{definition}

According to this definition, if a random variable can be obtained from another through an invertible transformation, they are not information-theoretically distinct.

\begin{proof}

We prove that for a capacity-achieving decomposable PIR code for $N$ servers and $K$ messages, the minimum upload cost to each server is at least $(K-1)\log_2 N$, {\em i.e., } it is a lower bound on $\log_2|\mathcal{Q}_n|$, $n\in \{0,1,\ldots,N-1\}$. To begin the proof, we find a set of queries $q_{0:N-1}$ for the message $W_0$, and assume that  the answers  have the property that the interference signal ({\em i.e., } the part of the answer that is not the requested message) is not null, {\em i.e., }
\begin{align}
H\left(W_{1:K-1}\cdot G^{(q_n)}_{n|0}\right)\neq 0,\quad n=\{0,1,\ldots,N-1\},
\end{align}
which always exists using the same argument as in Theorem \ref{theorem:minmessage}; {\em c.f.} (\ref{eqn:nonzero}). This implies that at least for one of the interference signals $k\in \{1,2,\ldots,K-1\}$ we have
\begin{align}
H\left(W_{k}\cdot G^{(q_n)}_{n|0:k-1,k+1:K-1}\right)\neq 0,\quad n=\{0,1,\ldots,N-1\},\label{eqn:interference0}
\end{align}
due to property \textit{\textbf{P2}}. Without loss of generality, let us assume it is $k=K-1$. With this set of queries, following the argument in Theorem \ref{theorem:minmessage}, at most one of the entropies
\begin{align}
H\left(W_{0}\cdot G^{(q_n)}_{n|1:K-1}\right),\quad n=\{0,1,\ldots,N-1\}, 
\end{align}
can be zero. Again without loss of generality, assume it is $n=0$. We shall denote a particular answer from a server as $V_{a_0,a_1,\ldots,a_{K-1}}$, the meaning of which will soon become apparent.

Since the queries $q_{0:N-1}$ are for the message $W_0$, the answers from the $N$ servers, respectively,
\begin{align}
V_{0,0,0,\ldots,0}, V_{1,0,0,\ldots,0}, \ldots,V_{N-1,0,0,\ldots,0},\label{eqn:Vs}
\end{align}
can be used to recover $W_0$, and moreover, the $W_0$ component functions
\begin{align}
W_{0}\cdot G^{(q_n)}_{n|1:K-1},\quad n=\{0,1,\ldots,N-1\}, \label{eqn:W0}
\end{align}
are all information-theoretically distinct by property \textit{\textbf{P3}}, and the fact that at most one of them can have zero entropy. We indicate this distinctness by the subscript in the answers (\ref{eqn:Vs}) in the $0$-th position.

Due to the privacy constraint, each answer in (\ref{eqn:Vs}) can also be used to reconstruct $W_1$, together with some other answers, {\em i.e., }
\begin{align}
\begin{array}{llllll}
\text{Server 0}&\text{Server 1}&\text{Server 2}&...&\text{Server $N-1$} \\\hline\hline
\underline{V_{0,0,0,\ldots,0}},&V_{0,1,0,\ldots,0},&V_{0,2,0,\ldots,0},&\ldots &V_{0,N-1,0,\ldots,0}&\rightarrow W_1\\
V_{1,N-1,0,\ldots,0},&\underline{V_{1,0,0,\ldots,0}},&V_{1,1,0,\ldots,0},&\ldots&V_{1,N-2,0,\ldots,0}&\rightarrow W_1\\
\ldots &\ldots &\ldots &\ldots &\ldots &\ldots\\
V_{N-1,1,0,\ldots,0},&V_{N-1,2,0,\ldots,0},&V_{N-1,3,0,\ldots,0},&\ldots&\underline{V_{N-1,0,0,\ldots,0}}&\rightarrow W_1,
\end{array}\label{eqn:multirows}
\end{align}
where in each row, the results produced by the component functions on $W_{0,2:N-1}$ in the answers are deterministic functions of each other across different servers, due to property \textit{\textbf{P2}}.
Moreover, for these answers, the component $W_{K-1}$ must satisfy
\begin{align}
H\left(W_{K-1}\cdot G^{(q_n)}_{n|0:K-2}\right)\neq 0,\quad n=\{0,1,\ldots,N-1\},\label{eqn:WN1}
\end{align}
due to our assumption on (\ref{eqn:interference0}) holding for $k=K-1$. As a consequence, in each row of (\ref{eqn:multirows}), the component functions $W_1\cdot G^{(q)}_{n|0,2:K-1}$ are again distinct. Note however, across rows of  (\ref{eqn:multirows}), the component functions on $W_1$ in the answers are not necessarily distinct or identical.  However, the component functions $W_{0:1}\cdot G^{(q)}_{n|2:K-1}$ of the answers in (\ref{eqn:multirows}) are all distinct, since they have distinct $W_0$ component functions in different rows , ({\em i.e.,} for answers in different rows, the $W_0$ component functions are (\ref{eqn:W0})), while for answers in the same row, the $W_1$ component functions are distinct. Thus there are at least $N$ answers with distinct component functions $W_{0:1}\cdot G^{(q)}_{n|2:K-1}$ at server-$n$, which are the answers with the sum of the indices equal to $n$ modulo $N$, given in the same column in (\ref{eqn:multirows}). 

Next consider each answer in (\ref{eqn:multirows}), which can also be used to recover $W_2$ due to the privacy requirement. For example, if we focus on the answers $V_{N-1,2,0,\ldots,0}$ and $V_{1,1,0,\ldots,0}$, which are from server $1$ and server $2$, respectively, they can be used to recover $W_2$ with some other answers
\begin{align}
\begin{array}{llllll}
\text{Server 0}&\text{Server 1}&\text{Server 2}&...&\text{Server $N-1$} \\\hline\hline
\ldots &\ldots &\ldots &\ldots &\ldots &\ldots\\
V_{N-1,2,N-1,\ldots,0},&\underline{V_{N-1,2,0,\ldots,0}},&V_{N-1,2,1,\ldots,0},&\ldots&V_{N-1,2,N-2,\ldots,0}&\rightarrow W_2\\
\ldots &\ldots &\ldots &\ldots &\ldots &\ldots\\
V_{1,1,N-2,\ldots,0},&V_{1,1,N-1,\ldots,0},&\underline{V_{1,1,0,\ldots,0}},&\ldots&V_{1,1,N-3,\ldots,0}&\rightarrow W_2\\
\ldots &\ldots &\ldots &\ldots &\ldots &\ldots.\\
\end{array}
\end{align}
Again these answers are distinct through a similar argument as before. Using this argument on all the answers in (\ref{eqn:multirows}) for the retrieval of $W_2$, it can seen that across all the servers, there are at least $N^3$ answers, whose component functions $W_{0:2}\cdot G^{(q)}_{n|3:K-1}$ are all distinct, and each server has at least $N^2$ answers whose corresponding component functions are distinct. We can continue this line of argument for messages $W_3,W_4,\ldots,W_{K-2}$, resulting in a total of $N^{K-1}$ answers at all the servers ($N^{K-2}$ at each server) in the form of 
\begin{align}
V_{a_0,a_1,\ldots,a_{K-2},0},\quad a_k\in\{0,1,\ldots,N-1\},\label{eqn:distinct}
\end{align}
whose component functions $W_{0:K-2}\cdot G^{(q)}_{n|K-1}$ are distinct. 

Next consider the reconstruction of the message $W_{K-1}$, for which we need to be more careful. In this case, we cannot assume the interference signals in the retrieval are not null, because $W_{K-1}$ is now the requested message, and the condition (\ref{eqn:WN1}) becomes insufficient; thus the component functions of $W_{K-1}$ in the answers cannot be guaranteed to be all distinct during a retrieval. However, notice that due to the distinctness of the component functions $W_{0:K-2}\cdot G^{(q)}_{n|K-1}$ in all the answers in (\ref{eqn:distinct}), at most one of these component functions can have zero entropy, {\em i.e., } in these answers, there is at most one of them satisfying,
\begin{align}
H\left(W_{0:K-2}\cdot G^{(q)}_{n|K-1}\right)=0.\label{eqn:zeroentropy}
\end{align} 
For all other answers that (\ref{eqn:zeroentropy}) does not hold, our previous induction argument based on the distinctness of the signal components still applies. For the one exception answer where (\ref{eqn:zeroentropy}) holds, which we assume without of generality to be $V_{0,0,\ldots,0}$, the answers to recover $W_{K-1}$ can be labeled as
\begin{align}
V_{0,0,\ldots,0}, V_{0,0,\ldots,1},\ldots,V_{0,0,\ldots,N-1},
\end{align}
which may not be all distinct since there may be more than one item with zero entropy. However, since they are placed at different servers, each one of them is distinct from all other answers at the same server. 

We have shown that at server-$n$, $n\in\{0,1,\ldots,N-1\}$, there are at least $N^{K-1}$ distinct answers $V_{a_0,a_1,\ldots,a_{K-1}}$ in the form of 
\begin{align}
\left(\sum_{k=0}^{K-1}a_k\right)_N=n,
\end{align} 
implying $|\mathcal{Q}_n|\geq N^{K-1}$. Our proof is now complete.
\end{proof}

\noindent\textbf{Remark: } Although Theorem \ref{theorem:upload} is stated in terms of the total upload cost, in the proof, we have actually shown that the upload cost at each individual server is greater than or equal to $(K-1)\log N$. 
 
\section{Symmetry and Symmetrized Codes}

\label{sec:symmetry}

The proposed code construction is able to achieve exponential improvements over the existing capacity-achieving PIR codes in the literature, in terms of both the message size and the upload cost. The question we wish to address in this section is what the root cause is for these improvements. It is clear that the existing codes in the literature, such as  \cite{sun2017PIRcapacity, sun2017optimal,banawan2018capacity, Wang_Skoglund, Tandon_CachePIR,Banawan_Ulukus_Multimessage},  are all symmetric, while our proposed code is not symmetric. It is thus natural to suspect that this symmetry vs. asymmetry relation is the root cause, however, in order to better understand this issue, we have to identify and evaluate carefully the symmetry relations in the problem. It should be noted that the symmetrization techniques given this section should not be viewed as design requirements stipulated by practical system design considerations, but rather should be viewed as theoretical tools to  pinpoint the key difference between our code construction and the existing ones, and perhaps to help future investigations on the capacities of privacy-preserving primitives, as they appear to be rather general. 

Recall our discussion on the minimum upload cost, which is related to $|\mathcal{Q}_n|$. For simplicity, we shall refer to the distinct answers (or precisely, distinct answer functions) at a server as the varieties of the answers at this server\footnote{The term variety here should be distinguished from the algebraic variety concept in algebraic geometry.}. This concept plays an instrumental role in the subsequent discussion.  

There are in fact three kinds of symmetry relations in this problem setting:
\begin{enumerate}[itemsep=0pt]
\item Server-symmetry: obtained by permuting the servers;
\item Message-symmetry: obtained by permuting the messages;
\item Variety-symmetry: obtained by compositing the varieties of answers. 
\end{enumerate}
Among the three types of symmetry relations, the variety-symmetry is the most interesting, and appears unique to the PIR problem. Through this symmetry, it can be shown that without loss of optimality on the retrieval rate, we can always assume that the varieties are requested with a uniform distribution at any given server.  
These three symmetry components can be operated in composition, and space sharing of all possible permuted codes eventually can yield a highly symmetric code. In this section we shall provide a precise characterization of these three types of symmetry relations, and discuss several consequences of these relations. Technically, this is accomplished by providing a new set of coding functions, which by space-sharing over some permutations will induce certain symmetry relation on the coding rates and the probability distribution.

Central to these symmetry relations are the following random variables
\begin{align}
&\{W_0,W_1,\ldots,W_{K-1}\},\{A^{(0)}_{0},A^{(1)}_{0},\ldots,A^{(|\mathcal{Q}_0|-1)}_0\},\{A^{(0)}_{1},A^{(1)}_{1},\ldots,A^{(|\mathcal{Q}_1|-1)}_1\},\ldots,\nonumber\\
&\qquad\qquad\qquad \qquad\qquad\qquad\qquad \qquad\qquad\qquad\ldots, \{A^{(0)}_{N-1},A^{(1)}_{N-1},\ldots,A^{(|\mathcal{Q}_{N-1}|-1)}_{N-1}\},
\end{align}
where $A^{(q)}_n$ is the answer at server-$n$ for the query $Q_n=q$. Note that $A^{(q)}_n$ is a deterministic function of the messages $W_{0:K-1}$; this should be distinguished from $A^{[k]}_n$ which is the (randomized) answer for the request of the message $W_k$ at server-$n$, and it is not a deterministic function of the messages $W_{0:K-1}$.

\subsection{Server-symmetry}
Let $\pi(\cdot)$ be a permutation function on the set $\{0,1,\ldots,N-1\}$, which is the set of server indices. For any PIR code which is specified by the four coding functions in Definition \ref{def:problem},  a new set of coding functions can be specified as 
\begin{gather}
\hat{\phi}_n=\phi_{\pi(n)},\quad \hat{\ell}_n=\ell_{\pi(n)},\quad \hat{\varphi}_{n}=\varphi_{\pi(n)},\quad n\in\{0,1,\ldots,N-1\},\nonumber\\
\quad\hat{\psi}(A_{0:N-1},k,\mathsf{F})=\psi(A_{\pi^{-1}(0:N-1)},k,\mathsf{F}).
\end{gather}

Let us examine an example where $N=4$, and let
\begin{align}
\pi([0,1,2,3])=[3,0,1,2].
\end{align}
Then we have
\begin{align}
\hat{\phi}_0=\phi_{3},\quad \hat{\phi}_1=\phi_{0},\quad \hat{\phi}_2=\phi_{1},\quad \hat{\phi}_3=\phi_{2},
\end{align}
that is, the query sent to server-$0$ in this new code is what was sent to server-$3$, etc.. Similarly,
\begin{gather}
\hat{\ell}_{0}=\ell_{3},\quad \hat{\ell}_{1}=\ell_{0},\quad \hat{\ell}_{2}=\ell_{1},\quad \hat{\ell}_{3}=\ell_{2},\nonumber\\
\hat{\varphi}_{0}=\varphi_{3},\quad \hat{\varphi}_{1}=\varphi_{0},\quad \hat{\varphi}_{2}=\varphi_{1},\quad \hat{\varphi}_{3}=\varphi_{2},
\end{gather}
that is, the function to produce the answer (and the length of the answer) at server-$0$ in the permuted code is what was used at server-$3$ for the same query value, etc.; moreover, for the reconstruction function
\begin{align}
\hat{\psi}(A_0,A_1,A_2,A_3,k,\mathsf{F})=\psi(A_1,A_2,A_3,A_0,k,\mathsf{F}),
\end{align}
that is, the reconstructed message $\hat{W}_k$ using random key $\mathsf{F}$, is in fact obtained by operating the original function on the permuted answers, {\em i.e., } using the answer obtained from server-$0$ in the place of what was for the answer from server-$3$, etc..

It is easy to see that this new set of coding functions is indeed privacy-preserving and correct, since there is no essential change in the coding operations. 
A direct consequence of the definition of the new code is reflected on the equivalence of the induced random variables in the two codes
\begin{align}
&\{\hat{W}_0,\hat{W}_1,\ldots,\hat{W}_{K-1}\},\{\hat{A}^{(0)}_{0},\hat{A}^{(1)}_{0},\ldots,\hat{A}^{(|\hat{\mathcal{Q}}_0|-1)}_0\},\{\hat{A}^{(0)}_{1},\hat{A}^{(1)}_{1},\ldots,\hat{A}^{(|\hat{\mathcal{Q}}_1|-1)}_1\},\ldots,\nonumber\\
&\qquad\qquad\qquad\ldots,\{\hat{A}^{(0)}_{N-1},\hat{A}^{(1)}_{N-1},\ldots,\hat{A}^{(|\hat{\mathcal{Q}}_{N-1}|-1)}_{N-1}\}\nonumber\\
&=\{W_0,W_1,\ldots,W_{K-1}\},\{A^{(0)}_{\pi(0)},A^{(1)}_{\pi(0)},\ldots,A^{(|\mathcal{Q}_{\pi(0)}|-1)}_{\pi(0)}\},\{A^{(0)}_{\pi(1)},A^{(1)}_{\pi(1)},\ldots,A^{(|\mathcal{Q}_{\pi(1)}|-1)}_{\pi(1)}\},\ldots,\nonumber\\
&\qquad\qquad\qquad\ldots,\{A^{(0)}_{\pi(N-1)},A^{(1)}_{\pi(N-1)},\ldots,A^{(|\mathcal{Q}_{\pi(N-1)}|-1)}_{\pi(N-1)}\}.\label{eqn:equivalence}
\end{align}

Next consider the following code constructed through the space-sharing technique using a base code. Let each message consist of a total of $NL$ symbols, and apply a permuted version of the base code on each length-$L$ sequence (and over the $K$ messages), which corresponds to one of the cyclic permutations on $\{0,1,\ldots,N-1\}$. This space-sharing code is clearly privacy-preserving and correct, and it has the property that $|\hat{\mathcal{Q}}_0|=|\hat{\mathcal{Q}}_1|=\ldots=|\hat{\mathcal{Q}}_{N-1}|=\prod_{n=0}^{N-1} |\mathcal{Q}_n|$, {\em i.e., } the upload costs to all the servers are the same. Moreover, the expected retrieval rates are also the same across all the  servers, {\em i.e.,} $\Expt (\hat{\ell}_0)=\Expt(\hat{\ell}_1)=\ldots=\Expt(\hat{\ell}_{N-1})$.

We could also space share over longer messages of $(N!)L$ symbols each, where for each length-$L$ sequence we apply the permuted coding function corresponding to one of the $N!$ permutations on $\{0,1,\ldots,N-1\}$. By leveraging (\ref{eqn:equivalence}), it is also possible to obtain an invariance in terms of the joint entropy values of the subsets of the random variables. Such refined invariant relations are not necessary for this work, however, similar relations have been shown to be important when deriving information theoretic converse bounds \cite{Tian:JSAC13,Tian:16Computer} in other information systems.

It should be noted that although the expected numbers of retrieved symbols are the same across the servers (and thus the retrieval rates are the same per server), this does not imply for each individual set of queries $q_{0:N-1}$ with non-zero probability, the numbers of symbols being retrieved are the same as those for another set of queries $q'_{0:N-1}$. To achieve such a fine level of invariance, we will need to invoke the variety-symmetry, to be introduced in Section \ref{sec:variety}.   

\subsection{Message-symmetry}
Let $\pi(\cdot)$ be a permutation function on the set $\{0,1,\ldots,K-1\}$, which is the set of  message indices. For any PIR code which is specified by the four coding functions in Definition \ref{def:problem},  a new set of coding functions can be specified as
\begin{gather}
\bar{\phi}_n=\phi_{n},\quad \bar{\ell}_n=\ell_{n},\quad \bar{\varphi}_{n}(q,W_{0:K-1})=\varphi_{n}(q,W_{\pi(0:K-1)}),\quad n\in\{0,1,\ldots,N-1\},\nonumber\\
\quad\bar{\psi}(A_{0:N-1},k,\mathsf{F})=\psi(A_{0:N-1},\pi^{-1}(k),\mathsf{F}).
\end{gather}

Let us examine an example where $K=3$ and let
\begin{align}
\pi([0,1,2])=[2,0,1].
\end{align}
Then we have for the functions $\bar{\varphi}_{n}$
\begin{align}
\bar{\varphi}_n(q,W_0,W_1,W_2)=\varphi_n(q,W_2,W_0,W_1),\quad n\in \{0,1,\ldots,N-1\},
\end{align}
that is, the message $W_0$ in the new code serves the role of $W_1$ in the original code, etc..

For the reconstruction functions
\begin{gather}
\bar{\psi}(A_{0:N-1},0,\mathsf{F})=\psi(A_{0:N-1},1,\mathsf{F}),\quad \bar{\psi}(A_{0:N-1},1,\mathsf{F})=\psi(A_{0:N-1},2,\mathsf{F}),\nonumber\\
\bar{\psi}(A_{0:N-1},2,\mathsf{F})=\psi(A_{0:N-1},0,\mathsf{F}),
\end{gather}
that is, the message $W_0$ is reconstructed in the same way as that for $W_1$ in the base code, etc..

This new set of coding functions is again privacy-preserving and correct. A direct consequence of the definition of the permuted code is reflected on the equivalence in the probability distribution of the random variables 
\begin{align}
&\{\bar{W}_0,\bar{W}_1,\ldots,\bar{W}_{K-1}\},\{\bar{A}^{(0)}_{0},\bar{A}^{(1)}_{0},\ldots,\bar{A}^{(|{\mathcal{Q}}_0|-1)}_0\},\{\bar{A}^{(0)}_{1},\bar{A}^{(1)}_{1},\ldots,\bar{A}^{(|{\mathcal{Q}}_1|-1)}_1\},\ldots,\nonumber\\
&\qquad\qquad\qquad \ldots,\{\bar{A}^{(0)}_{N-1},\bar{A}^{(1)}_{N-1},\ldots,\bar{A}^{(|{\mathcal{Q}}_{N-1}|-1)}_{N-1}\}\nonumber\\
&\stackrel{d}{=}\{W_{\pi^{-1}(0)},W_{\pi^{-1}(1)},\ldots,W_{\pi^{-1}(K-1)}\},\{{A}^{(0)}_{0},{A}^{(1)}_{0},\ldots,{A}^{(|{\mathcal{Q}}_0|-1)}_0\},\{{A}^{(0)}_{1},{A}^{(1)}_{1},\ldots,{A}^{(|{\mathcal{Q}}_1|-1)}_1\},\ldots,\nonumber\\
&\qquad\qquad\qquad\ldots,\{{A}^{(0)}_{N-1},{A}^{(1)}_{N-1},\ldots,{A}^{(|{\mathcal{Q}}_{N-1}|-1)}_{N-1}\},\label{eqn:equivalenceinD}
\end{align}
where $\stackrel{d}{=}$ indicates equivalence in distribution, but not necessarily identical.

Next consider the following code constructed through the space-sharing technique using a base code. Let each message consist of a total of $(K!)L$ symbols, and apply a permuted version of the base code on each length-$L$ sequence (and across $K$ messages), which corresponds to one of the possible permutations on $\{0,1,\ldots,K-1\}$. This space-sharing code is clearly privacy-preserving and correct, however it does not lead to any explicit symmetry relation on the coding rates and the distribution on the queries. It does lead to more subtle invariant relations on the entropies of the subsets of the random variables, {\em e.g., } the joint entropy of a subset of the answers and a subset of the messages is invariant to which subset of messages is being involved. This symmetry cannot produce the invariance on the individual varieties we mentioned earlier.   

\subsection{Variety-symmetry}
\label{sec:variety}

The last symmetry we consider is produced by constructing a different set of queries (and answer varieties) and a new random key $\hat{\mathsf{F}}$ to retrieve the messages. The variety-symmetry is constructed using a different mechanism than the previous two types of symmetry relations. 

Recall in the base code, the random key $\mathsf{F}$ is uniformly distributed on the alphabet $\mathcal{F}$. In the new code, the random key is uniformly distributed on the following set
\begin{align}
\grave{\mathcal{F}}\triangleq \{f_{0:|\mathcal{F}|-1}\in \mathcal{F}^{|\mathcal{F}|}:  f_{0:|\mathcal{F}|-1} \text{ is  a permutation of the elements of } \mathcal{F} \}.
\end{align} 
It follows that $|\grave{\mathcal{F}}|=|\mathcal{F}|!$. The new code operates as follows. The message has $|\mathcal{F}|L$ symbols, which is partitioned into $|\mathcal{F}|$ length-$L$ blocks. Suppose a particular random key realization $\grave{\mathsf{F}}=f_{0:|\mathcal{F}|-1}$ is generated for the new code. For index $i\in \{0,1,\ldots,|\mathcal{F}|-1\}$, the corresponding $i$-th blocks of the messages  are encoded using the base code retrieval strategy determined by the key value $\mathsf{F}=f_i\in \mathcal{F}$.

This new code is clearly correct, and next we show that it is also privacy-preserving. Recall for the request of the message $W_k$, the query for server-$n$ is a deterministic function of the random key $\mathsf{F}=f$ in the base code. Because in the new code, any valid key $f_{0:|\mathcal{F}|-1}$ is a permutation of all the elements in $\mathcal{F}$, the number of times that a particular query $q\in\mathcal{Q}_n$ appears in such a query sequence $f_{0:|\mathcal{F}|-1}$ at server-$n$  is given by
\begin{align}
\kappa_{n,k}(q)\triangleq \left|\{f\in\mathcal{F}: \phi_n(k,f)=q\}\right|. 
\end{align}
Because the base code is privacy-preserving, we have
\begin{align}
\kappa_n(q)\triangleq\kappa_{n,0}(q)=\kappa_{n,1}(q)=\ldots=\kappa_{n,K-1}(q),\quad n\in\{0,1,\ldots,N-1\},\quad q\in \mathcal{Q}_n.\label{eqn:composition}
\end{align}
The composition of any query $q_{0:|\mathcal{F}|-1}$ sent to server-$n$ for the request of the message $W_k$ in this new code, which is a vector of length $|\mathcal{F}|$, is thus given exactly by (\ref{eqn:composition}), and the only difference among the queries is the patterns that these elements in $\mathcal{Q}_n$ are arranged. Thus, the query set at server-$n$ is the constant composition set, {\em i.e., }
\begin{align}
\mathcal{T}_n=\{q_{0:|\mathcal{F}|-1}\in \mathcal{Q}_n^{|\mathcal{F}|}:\text{ the number of appearances of any $q\in\mathcal{Q}_n$ in $q_{0:|\mathcal{F}|-1}$}=\kappa_n(q)\}.
\end{align}
Due to the symmetry in  $\grave{\mathcal{F}}$ and $\mathcal{T}_n$, as well as the uniform distribution on $\grave{\mathcal{F}}$, it is clear that the distribution of the query on $\mathcal{T}_n$ is also uniform, regardless of the identity of the requested message. Thus this new code is indeed privacy-preserving. As a direct consequence of the construction, at each server, all the answer varieties also have the same numbers of symbols to transmit. 

\subsection{Applying the Symmetrization Techniques}


Let us revisit our example for $(N,K)=(2,2)$ given in Section \ref{sec:motivatingexample}. 
To make a variety-symmetric code, we let each message be $2$ bits, denoted as $A=(a_1,a_2),B=(b_1,b_2)$, respectively. The total number of new varieties at each server is $|\grave{\mathcal{Q}}_n|=2!$. This new code is illustrated in Table \ref{table:2_2altersymmetry}. It can be seen that now at each server, the lengths of the answers are indeed the same. We can further apply the server-symmetrization technique, which will produce a code quite similar to that proposed in \cite{sun2017PIRcapacity} and illustrated in Table \ref{table:2_2}.
\begin{table}
\centering
\caption{Answers for message $A$ and $B$ for $(N,K)=(2,2)$ after variety-symmetrization. \label{table:2_2altersymmetry}}
\vspace{0.2cm}
\begin{tabular}{|c||c|c||c|c|}\hline
\multirow{2}{*}{}&\multicolumn{2}{c||}{Requesting $A$}&\multicolumn{2}{c|}{Requesting $B$}\\\cline{2-5}
                            &Server-1&Server-2&Server-1&Server-2\\\hline\hline
$\mathsf{F}=(01)$&$a_2+b_2$  &$a_1,b_2$&$a_2+b_2$  &$a_2,b_1$\\\hline
$\mathsf{F}=(10)$&$a_1+b_1$  &$a_2,b_1$&$a_1+b_1$  &$a_1,b_2$\\\hline
\end{tabular}
\end{table}


We can apply the variety-symmetrization technique on our proposed code with more general parameters.  The message size will increase by a factor of $N^{K-1}$, resulting in a total message size of $N^{K-1}(N-1)$ in the new symmetrized code. In \cite{sun2017optimal}, it was shown that if we insist that the total number of retrieved symbols from all servers is the same for all possible query combinations, then the minimum message size is $N^{K-1}$. Our proposed code in Section \ref{sec:newcode} has a much smaller message size of $N-1$, but does not have this property which turns out to be rather restrictive. On the other hand, the variety-symmetrized code based on our proposed code has a slightly larger message size of  $N^{K-1}(N-1)$ than the optimal value in the restricted setting of \cite{sun2017optimal}. This relatively small increase appears to have stemmed from the decoupled design strategy of applying the symmetrization technique on a base code, instead of designing a symmetric code directly.  

More generally, we can apply all three symmetrization techniques on any asymmetric code (in any order) to obtain a code that is highly symmetric without jeopardizing the retrieval rate, but at the expense of the message size and the upload cost. From this perspective, the reason behind the small message size and upload cost of the proposed code is indeed its asymmetric nature. 

\section{Conclusion}
\label{sec:conclusion}

We proposed a new capacity-achieving PIR code construction, which has the optimal message size and the optimal upload cost. The key to the reduction of both factors, compared to existing constructions, appears to be the asymmetry in the proposed code. In order to prove converse bounds for the optimal message size and the optimal upload cost, we extracted certain critical structures in the converse proof of the PIR capacity. The symmetry structure in the PIR problem is of interest in its own right, and we provided a careful analysis of this structure, which can be used to symmetrize any PIR code into its symmetric version. 

Although in this work we have focused on the most canonical setting of the private information retrieval problem, the proposed code construction using asymmetric structure can be extended to more general settings, such as maximum distance separable code (MDS-coded) databases, which will be reported elsewhere. 

\begin{appendices}

\section{Proof of Technical Lemmas}

\begin{proof}[Proof of Lemma \ref{lemma:first}]
Without loss of generality, let us consider $k=0$. We start by writing the following chain of inequalities:
\begin{align}
&I\left(W_{1:K-1};A^{[0]}_{0:N-1}\Big{|}W_0,\mathsf{F}\right)\nonumber\\
&\stackrel{(a)}{=}I\left(W_{1:K-1};A^{[0]}_{0:N-1},W_0\Big{|}\mathsf{F}\right)\nonumber\\
&=I\left(W_{1:K-1};A^{[0]}_{0:N-1}\Big{|}\mathsf{F}\right)+I\left(W_{1:K-1};W_0\Big{|}A^{[0]}_{0:N-1},\mathsf{F}\right)\nonumber\\
&\stackrel{(b)}{=}I\left(W_{1:K-1};A^{[0]}_{0:N-1}\Big{|}\mathsf{F}\right)\nonumber\\
&=H\left(A^{[0]}_{0:N-1}\Big{|}\mathsf{F}\right)-H\left(A^{[0]}_{0:N-1}\Big{|}W_{1:K-1},\mathsf{F}\right)\nonumber\\
&=H\left(A^{[0]}_{0:N-1}\Big{|}\mathsf{F}\right)-H\left(W_0,A^{[0]}_{0:N-1}\Big{|}W_{1:K-1},\mathsf{F}\right)+H\left(W_0\Big{|}A^{[0]}_{0:N-1},W_{1:K-1},\mathsf{F}\right)\nonumber\\
&\stackrel{(c)}{=}H\left(A^{[0]}_{0:N-1}\Big{|}\mathsf{F}\right)-H\left(W_0|W_{1:K-1},\mathsf{F}\right)\nonumber\\
&\stackrel{(d)}{\leq} \left[\frac{L}{R}-L\right]\log_2|\mathcal{X}|,
\end{align}
where $(a)$ is because the components of $(W_0,W_1,\ldots,W_{K-1},\mathsf{F})$ are mutually independent, $(b)$ and $(c)$ are due to the retrieval correctness requirement and the fact that $A_{{0:N-1}}$ is a deterministic function of $(W_{0:K-1},\mathsf{F})$, and $(d)$ is by the definition of the retrieval rate. 

To see the independence condition \textbf{\textit{P1}}, let us consider $(d)$, and we can write 
\begin{align}
H(A^{[0]}_{0:N-1}|\mathsf{F})&=\sum_{q_{0:N-1}}\mathbf{Pr}(Q^{[0]}_{0:N-1}=q_{0:N-1})H\left(A^{[0]}_{0:N-1}\Big{|}Q^{[0]}_{0:N-1}=q_{0:N-1}\right)\nonumber\\
&=\sum_{q_{0:N-1}}\mathbf{Pr}(Q^{[0]}_{0:N-1}=q_{0:N-1})H\left(A^{(q_0)}_{0},A^{(q_1)}_{1},\ldots,A^{(q_{N-1})}_{N-1}\right)\nonumber\\
&\stackrel{(e)}{\leq} \sum_{q_{0:N-1}}\mathbf{Pr}(Q^{[0]}_{0:N-1}=q_{0:N-1})\sum_{n=0}^{N-1}H\left(A^{(q_n)}_{n}\right)\nonumber\\
&\leq \sum_{q_{0:N-1}}\mathbf{Pr}(Q^{[0]}_{0:N-1}=q_{0:N-1})\sum_{n=0}^{N-1}\ell_n\log_2|\mathcal{Y}|\nonumber\\
&= \log_2|\mathcal{Y}|\sum_{n=0}^{N-1}\mathbb{E}(\ell_n)\nonumber\\
&=\frac{L}{R}\log_2|\mathcal{X}|.
\end{align}
For the equality to hold, it is clear that $(e)$ must be equality for any $q_{0:N-1}$ of non-zero probability, and thus the independence condition \textbf{\textit{P1}} must hold for $k=0$. However, by choosing a permutation $\pi$ on $\{0,1,\ldots,K-1\}$ such that $\pi(k)=0$ and using the same line of proof, it can be concluded that the independence condition holds for all coding matrices of any given requested message. The proof is thus complete.
\end{proof}

\begin{proof}[Proof of Lemma \ref{lemma:second}]
Without loss of generality, let us consider the identity permutation function $\pi(k)=k$. We can start by writing the following chain of information inequalities:
\begin{align}
&NI\left(W_{k:K-1};A_{0:N-1}^{[k-1]}\Big{|}W_{0:k-1},\mathsf{F}\right)\nonumber\\
&\stackrel{(a)}{\geq} \sum_{n=0}^{N-1} I\left(W_{k:K-1};A_{n}^{[k-1]}\Big{|}W_{0:k-1},\mathsf{F}\right)\nonumber\\
&\stackrel{(b)}{=}\sum_{n=0}^{N-1} I\left(W_{k:K-1};A_{n}^{[k]}\Big{|}W_{0:k-1},\mathsf{F}\right)\nonumber\\
&\stackrel{(c)}{=}\sum_{n=0}^{N-1} H\left(A_{n}^{[k]}\Big{|}W_{0:k-1},\mathsf{F}\right)\nonumber\\
&\stackrel{(d)}{\geq}\sum_{n=0}^{N-1} H\left(A_{n}^{[k]}\Big{|}W_{0:k-1},\mathsf{F},A_{0:n-1}^{[k]}\right)\nonumber\\
&=\sum_{n=0}^{N-1} I\left(W_{k:K-1};A_{n}^{[k]}\Big{|}W_{0:k-1},\mathsf{F},A_{0:n-1}^{[k]}\right)\nonumber\\
&=I\left(W_{k:K-1};A_{0:N-1}^{[k]}\Big{|}W_{0:k-1},\mathsf{F}\right)\nonumber\\
&\stackrel{(e)}{=}I\left(W_{k:K-1};W_k,A_{0:N-1}^{[k]}\Big{|}W_{0:k-1},\mathsf{F}\right)\nonumber\\
&=L\log_2|\mathcal{X}|+I\left(W_{k+1:K-1};A_{0:N-1}^{[k]}\Big{|}W_{0:k},\mathsf{F}\right),
\end{align}
where $(c)$ is because the answers are deterministic functions of the messages and the random key $\mathsf{F}$, $(e)$ is due to the retrieval correctness requirement, and the equality $(b)$ can be justified as follows.  We can write that
\begin{align}
&I\left(W_{k:K-1};A_{n}^{[k-1]}\Big{|}W_{0:k-1},\mathsf{F}\right)\nonumber\\
&=H\left(A_{n}^{[k-1]}\Big{|}W_{0:k-1},\mathsf{F}\right)- H\left(A_{n}^{[k-1]}\Big{|}W_{0:K-1},\mathsf{F}\right)\nonumber\\
&=H\left(A_{n}^{[k-1]}\Big{|}W_{0:k-1},\mathsf{F}\right)\nonumber\\
&\stackrel{(f)}{=}H\left(A_{n}^{[k-1]}\Big{|}W_{0:k-1},Q_{n}^{[k-1]}\right)\nonumber\\
&\stackrel{(g)}{=}H\left(A_{n}^{[k]}\Big{|}W_{0:k-1},Q_{n}^{[k]}\right)\nonumber\\
&\stackrel{(h)}{=}H\left(A_{n}^{[k]}\Big{|}W_{0:k-1},\mathsf{F}\right)- H\left(A_{n}^{[k]}\Big{|}W_{0:K-1},\mathsf{F}\right)\nonumber\\
&=I\left(W_{k:K-1};A_{n}^{[k]}\Big{|}W_{0:k-1},\mathsf{F}\right),
\end{align}
where $(f)$ is due to the Markov string $(A_{n}^{[k]},W_{0:K-1})\leftrightarrow Q_{n}^{[k]}\leftrightarrow \mathsf{F}$, $(g)$ is because of the privacy constraint, and $(h)$ is because of the afore-mentioned Markov string and the fact that $Q_{n}^{[k]}$ is a deterministic function of $\mathsf{F}$.

The inequalities $(a)$ and $(d)$ are due to the standard non-negativity property of mutual information. However, the necessary conditions stated in the lemma can be derived from these two inequalities. First consider when $(a)$ is equality, from which we must have for decomposable codes that
\begin{align}
0&=H\left(A_{0:N-1}^{[0]}\Big{|}A_{n}^{[0]},W_{0},\mathsf{F}\right)\nonumber\\
&=\sum_{q_{0:N-1}} \mathbf{Pr}(Q^{[0]}_{0:N-1}=q_{0:N-1})H\left(A_{0:N-1}^{[0]}\Big{|}A_{n}^{[0]},W_{0},Q^{[0]}_{0:N-1}=q_{0:N-1}\right)\nonumber\\
&\stackrel{(i)}{=}\sum_{q_{0:N-1}} \mathbf{Pr}(Q^{[0]}_{0:N-1}=q_{0:N-1})\nonumber\\
&\qquad\cdot H\left(W_{1:K-1}\cdot\left[G^{(q_0)}_{0|0},G^{(q_1)}_{1|0},\ldots,G^{(q_{N-1})}_{N-1|0}\right]\bigg{|}W_{1:K-1}\cdot G^{(q_n)}_{n|0},W_{0},Q^{[0]}_{0:N-1}=q_{0:N-1}\right)\nonumber\\
&\stackrel{(j)}{=}\sum_{q_{0:N-1}} \mathbf{Pr}(Q^{[0]}_{0:N-1}=q_{0:N-1})\cdot H\left(W_{1:K-1}\cdot\left[G^{(q_0)}_{0|0},G^{(q_1)}_{1|0},\ldots,G^{(q_{N-1})}_{N-1|0}\right]\bigg{|}W_{1:K-1}\cdot G^{(q_n)}_{n|0}\right),
\end{align}
where in $(i)$ we have utilized the fact that the component functions $W_0\cdot G^{(q_n)}_{n|1:K-1}$ can be meaningfully subtracted from the answers in the abelian group, and $(j)$ is because $W_0$ is now independent of everything else after the corresponding component functions are eliminated in the answers, and the dependence on $q_{0:N-1}$ is fully absorbed in the answer function matrix $G^{q_{0:N-1}}_n$. 
This implies that for any set of queries $Q_{0:N-1}^{[0]}=q_{0:N-1}$ with a non-zero probability, 
\begin{align}
H\left(W_{1:K-1}\cdot\left[G^{(q_0)}_{0|0},G^{(q_1)}_{1|0},\ldots,G^{(q_{N-1})}_{N-1|0}\right]\bigg{|}W_{1:K-1}\cdot G^{(q_n)}_{n|0}\right)=0,\quad n\in \{0,1,\ldots,N-1\}.
\end{align}
This indeed implies that $W_{1:K-1}\cdot G^{(q_n)}_{n|0}$ can determine any $W_{1:K-1}\cdot G^{(q_{n\rq{}})}_{n\rq{}|0}$, for $n,n\rq{}\in \{0,1,\ldots,N-1\}$. Since the query can be other than for the message $W_0$ (by taking a different permutation $\pi(\cdot)$ in the lemma), it follows that the deterministic property \textit{\textbf{P2}} indeed holds. 

Next consider $(d)$, particularly for $k=K-1$ and the summand for $n=N-1$. For decomposable codes, the inequality being equality implies that 
\begin{align}
0&=I\left(A_{N-1}^{[K-1]};A_{0:N-2}^{[K-1]}\bigg{|}W_{0:K-2},Q_{0:N-1}^{[K-1]}=q_{0:N-1}\right)\nonumber\\
&=I\left(W_{K-1}\cdot G_{N-1|0:K-2}^{(q_{N-1})};W_{K-1}\cdot \left[G_{0|0:K-2}^{(q_{0})},G_{1|0:K-2}^{(q_{1})},\ldots,G_{N-2|0:K-2}^{(q_{N-2})}\right]\right),
\end{align}
which further implies the independence between the random variables $W_{K-1}\cdot G_{N-1|0:K-2}^{(q_{N-1})}$ and $W_{K-1}\cdot \left[G_{0|0:K-2}^{(q_{0})},G_{1|0:K-2}^{(q_{0})},\cdot,G_{N-2|0:K-2}^{(q_{N-2})}\right]$. Since in the above argument, we can choose any value $k$ in $(d)$,  and take any other order in the summation on both sides of $(d)$, indeed the stated independence property \textit{\textbf{P3}} holds. The proof is now complete.
\end{proof}
\end{appendices}

\bibliographystyle{IEEEtran}

\newcommand{\noop}[1]{}

\end{document}